
\documentclass[twoside,leqno,twocolumn]{article}

\usepackage[letterpaper]{geometry}

\usepackage{ltexpprt}
\usepackage{hyperref}

\usepackage{algorithm}
\usepackage{algorithmic}
\usepackage{graphicx}
\usepackage{amsmath}
\usepackage{amsfonts}
\usepackage{subfigure}
\usepackage{xcolor}
\usepackage{booktabs} 

\newcommand{\PATHATTACK}{\texttt{PATHATTACK}}
\DeclareMathOperator*{\argmin}{arg\,min}
\newcommand{\ints}{\mathbb{Z}}
\newcommand{\reals}{\mathbb{R}}
\newcommand{\zeros}{\mathbf{0}}
\newcommand{\calD}{\mathcal{D}}
\newcommand{\calB}{\mathcal{B}}
\newcommand{\calP}{\mathcal{P}}
\newcommand{\calG}{\mathcal{G}}
\newcommand{\calX}{\mathcal{X}}
\newcommand{\expect}{\mathbb{E}}
\newcommand{\cerr}{c_\mathrm{err}}
\newcommand{\cVec}{\mathbf{c}}
\newcommand{\xVec}{\mathbf{x}}
\newcommand{\wVec}{\mathbf{w}}
\newcommand{\Etemp}{E_{\textrm{temp}}}
\newcommand{\btemp}{b_\mathrm{temp}}
\newcommand{\pattack}{p_\mathrm{attack}}
\newcommand{\ppath}{p_\mathrm{path}}
\newcommand{\ppair}{p_\mathrm{pair}}

 \newproof{@sketch}{Proof Sketch}
 \newenvironment{sketch}{\begin{@sketch}}{\end{@sketch}}

\begin{document}

\newcommand\relatedversion{}
\renewcommand\relatedversion{\thanks{The full version of the paper---including all appendices---can be accessed at \protect\url{https://arxiv.org/abs/2305.19083}}} 

\title{\Large Defense Against Shortest Path Attacks}
\author{Benjamin A. Miller\thanks{Northeastern University, Boston, MA, USA (\{miller.be, shafi.z, t.eliassirad\}@northeastern.edu)}\and Zohair Shafi$^*$\and Wheeler Ruml\thanks{University of New Hampshire, Durham, NH, USA (ruml@cs.unh.edu)}\and Yevgeniy Vorobeychik\thanks{Washington University in St. Louis, St. Louis, MO, USA (yvorobeychik@wustl.edu)}\and Tina Eliassi-Rad$^*$\and Scott Alfeld\thanks{Amherst College, Amherst, MA, USA (salfeld@amherst.edu)}}

\date{}

\maketitle




\fancyfoot[R]{\scriptsize{Copyright \textcopyright\ 2025\\
Copyright for this paper is retained by authors}}



\begin{abstract} \small\baselineskip=9pt Identifying shortest paths between nodes in a network is an important task in many applications. Recent work has shown that a malicious actor can manipulate a graph to make traffic between two nodes of interest follow their target path. In this paper, we develop a defense against such attacks by modifying the edge weights that users observe. The defender must balance inhibiting the attacker against any negative effects on benign users. Specifically, the defender’s goals are: (a) recommend the shortest paths to users, (b) make the lengths of the shortest paths in the published graph close to those of the same paths in the true graph, and (c) minimize the probability of an attack. We formulate the defense as a Stackelberg game in which the defender is the leader and the attacker is the follower.  We also consider a zero-sum version of the game in which the defender’s goal is to minimize cost while achieving the minimum possible attack probability. We show that the defense problem is NP-hard and propose heuristic solutions for both the zero-sum and non-zero-sum settings. By relaxing some constraints of the original problem, we formulate a linear program for local optimization around a feasible point. We present defense results with both synthetic and real networks and show that our methods often reach the lower bound of the defender’s cost.
\end{abstract}

\section{Introduction}
In numerous applications involving the routing of resources through a network, finding the shortest path between two nodes is an important problem. A malicious actor with the capacity to modify the graph could entice users to follow a particular path that could put them at risk. In cybersecurity, for example, an attacker could convince users to use compromised routers to intercept traffic and possibly steal resources~\cite{Goodin2022}. To counter adversarial activity, it is important to consider defensive measures against such behavior.
 
Recent work has proposed an algorithm to manipulate the shortest path when the attacker is able to remove edges~\cite{Miller2021}. In this paper, taking inspiration from  differential privacy, we propose a defense technique based on perturbing edge weights. Users are presented an altered set of edge weights that aims to provide the shortest paths possible while raising the attacker's cost. The contributions of this paper are as follows: (1) We define a defender cost based on the impact on user experience and probability of attack. (2) We formulate a Stackelberg game to optimize the defender's expected cost. (3) In a zero-sum setting, we show that this optimization is NP-hard. (4) We propose \texttt{PATHDEFENSE}, a heuristic algorithm that greedily increments edge weights until the user's cost is sufficiently low. (5) We present results on simulated and real networks demonstrating the cost improvement \texttt{PATHDEFENSE} provides.

\section{Problem Definition}
In our problem setting, a graph $G$ has weights $w$, and an attacker intends to remove edges to make a particular target path be the shortest between its endpoints. The defender's goal is to publish approximate weights that provide users with short paths to their destinations while also increasing the burden on the adversary, making an attack less likely. This method is inspired by a differential privacy technique for approximating shortest paths without revealing true weights~\cite{Sealfon2016}, though here we consider the weight perturbations in an optimization context. A simple example of this scenario is shown in Figure~\ref{fig:example}, which demonstrates that the defender can raise the attacker's required budget and the risk that there may be more disruption to the graph if the attack still occurs. We refer to the problem of minimizing the defender's cost in this context as the \emph{Cut Defense} problem. The analysis over the remainder of the paper makes the following assumptions: (1) The attacker has a single target path $p^*$ (not necessarily known to the defender) and uses a method known to the defender to optimize the attack. (2) If the optimization method identifies an attack within the attacker's budget $b$ (not necessarily known to the defender), the attack will occur. (3) True edge weights and removal costs are known to the attacker.
\begin{figure}
    \centering
    \includegraphics[width=0.45\textwidth]{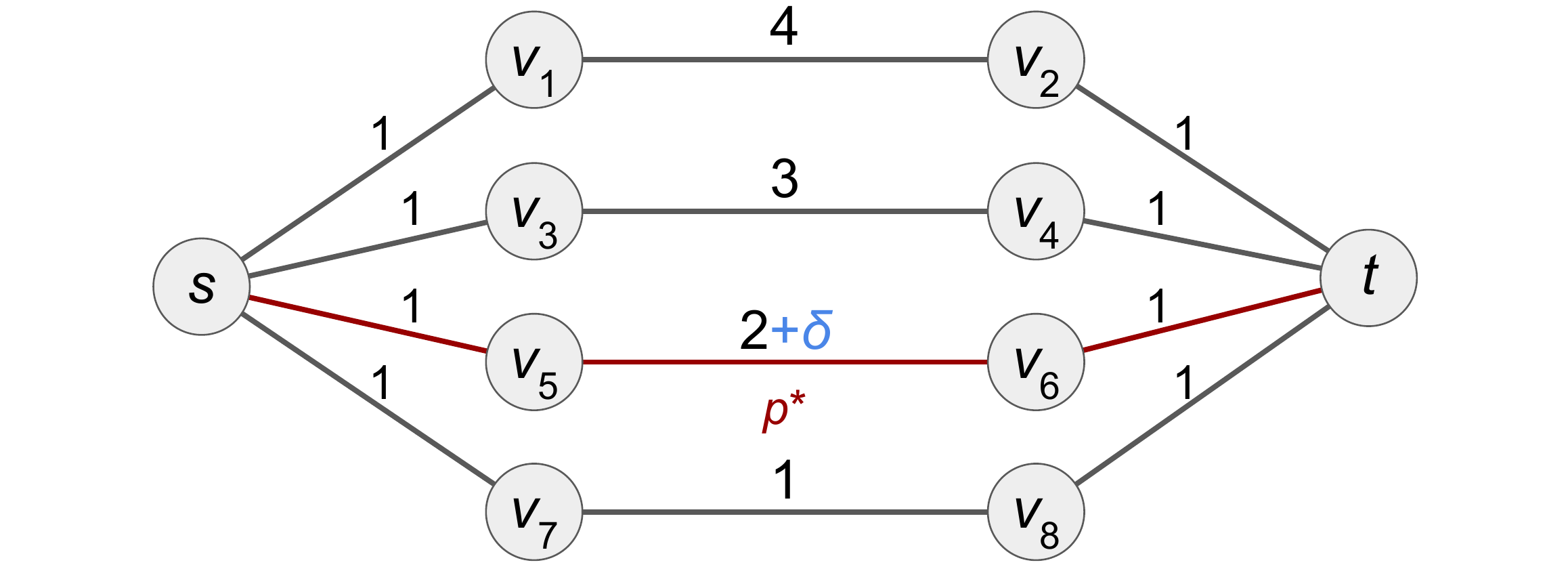}
    \caption{A simple example of the defense method. With no perturbation ($\delta=0$), $p^*$ can become the shortest path from $s$ to $t$ if only 1 edge is cut, whereas for $\delta\geq 2$, 3 edges must be removed. If the attacker has a budget of at least 3, however, the attack would cause more disruption, and the resulting cost to the defender would be higher. For example, if $\{s,v_1\}$, $\{s, v_3\}$, and $\{s,v_7\}$ are cut, all traffic between $s$ and $v_1$ or  $v_3$ will take a much longer path than it would have when $\delta=0$.}
    \label{fig:example}
\end{figure}

\subsection{Notation}
We consider a graph $G=(V, E)$, which may be directed or undirected. Each edge has a nonnegative weight $w:E\rightarrow\reals_{\geq0}$. The weights denote true traversal distances. The defender publishes weights $w^\prime:E\rightarrow\reals_{\geq0}$, which may be different than $w$. For a given source--destination pair $s,t\in V$, let $p(G, \hat{w}, s, t)$ be the shortest path in $G$ from $s$ to $t$ using weights $\hat{w}$. For a path $p$ between two nodes, let $\ell(G, \hat{w}, p)$ be the length of $p$ in $G$ using weights $\hat{w}$. We denote by $p^*$ and $b$ the attacker's target path and budget, respectively.

When determining the impact on users, we consider the distribution of source--destination pairs, $\mathcal{D}$, as this will help determine how often paths are disrupted. In addition, we assume the defender has uncertainty about $p^*$ and $b$. The defender considers a distribution $\mathcal{P}$ of possible target paths and a distribution $\mathcal{B}$ of possible budgets. These distributions result in a distribution of user-observed graphs, $\mathcal{G}$, which we describe in the next section. The defender's cost (loss) function is denoted by $L$. A notation table is provided in Appendix~\ref{sec:notation}.

\subsection{Stackelberg Game}
\label{subsec:stackelberg}
We frame the attacker--defender interaction as a Stackelberg game in which the defender is the leader and the attacker is the follower. The defender has full knowledge of the attacker's action set, and tries to choose the optimal defense given the attacker's assumed response. 

\paragraph{Attacker} The attacker will observe a graph $G=(V,E)$ with weights $w^\prime$ published by the defender, and may also know the true weights $w$. Each edge $e\in E$ has a removal cost $c(e) > 0$ that is known to the attacker. The attacker has a target path $p^*$ from source $s$ to destination $t$, and a budget $b$ specifying the greatest edge removal cost the attacker can incur. The attacker runs an algorithm, which is known to the defender, that solves the Force Path Cut problem~\cite{Miller2021}: Find a set of edges $E^\prime$ where $c(E^\prime):=\sum_{e\in E^\prime}{c(e)}\leq b$ and $p^*$ is the shortest path from $s$ to $t$ in $G^\prime=(V, E\setminus E^\prime)$ (using the published weights $w^\prime$). If the attack algorithm yields a solution with cost greater than $b$, the attack is not worth the cost to the attacker, so $G^\prime=G$.

\paragraph{Defender} The defender publishes a modified set of weights $w^\prime$. While the defender knows the method that the attacker will use, we assume there is uncertainty with respect to the attacker's target path $p^*$ and budget $b$. The defender has a distribution over both of these variables, as defined above. The distributions $\calP$ and $\calB$ combine with the published weights $w^\prime$ to create a distribution over graphs $\calG$ as follows. For a given $p^*$ in $\calP$, let $E^\prime$ be the solution given by the attack algorithm using the published weights, and assume it is a unique solution across all target paths. (If multiple target paths have the same solution, the probability of the resulting graph integrates across those paths.) Then the probability that users observe graph $G^\prime=(V, E\setminus E^\prime)$ is
\begin{equation}
    \Pr_\calG[G^\prime]=\Pr_{P\sim\calP}[P=p^*]\cdot\Pr_{B\sim\calB}[c(E^\prime)\leq B].
\end{equation}

The defender's goal is to publish a set of weights that has minimal expected cost, i.e., 
\begin{equation}
    \hat{w}^\prime=\argmin_{w^\prime}\expect\left[L(G, w, w^\prime, \calD, \calP, \calB)\right].
\end{equation}
There are several considerations when defining the defender's cost, which we discuss in detail next.

\subsection{Defender's Cost Function}
The attacker's cost function is simple: The goal is to execute the attack, so after computing the set of edges to remove, if it is within the attacker's budget, the cost is 0 (attack occurs), and otherwise the cost is 1 (attack does not occur). When determining the best course of action, the defender has three considerations. The first is the cost incurred by users of the network: the distance they must travel to get from their origin points to their destinations. If the users must travel longer distances, the cost to the defender is higher. Note that this is the actual distance traveled: The user selects a path $p$ based on the perturbed weights $w^\prime$, but the distance is computed based the original weights $w$. There is also a cost associated with users traveling a different distance than advertised. If the length of $p$ is $\ell^{\textrm{true}}$, but the user is told the length is $\ell^{\textrm{obs}}$, this may negatively affect the user's experience. If $\ell^{\textrm{obs}} < \ell^{\textrm{true}}$, the user will likely be dissatisfied with traversing a longer distance than advertised. The case where $\ell^{\textrm{obs}} > \ell^{\textrm{true}}$ is less clear. If the advertised distance is only slightly greater than the true distance, the user may be happy to experience a shorter distance than advertised. If, on the other hand, the advertised distance is drastically larger, this may induce an additional burden on users, and thus additional cost for the defender.

Finally, there may be situations where there is some additional cost to the defender if the adversary is successful. This would be a cost \emph{in addition} to the cost due to longer distances experienced by users after the attack. If, for example, the new traffic route allows the adversary to gain a competitive advantage over the defender, this would have a broader negative consequence than the specific issue of users experiencing longer distances. If this is an issue for the defender, there will be another component to the cost function to account for the expected cost of attacker success.

To formalize the cost function, we consider the three costs described above:
\begin{enumerate}
    \item $L_d$: The average \emph{distance} traveled by users\label{item:distance}
    \item $L_e$: The average cost of the \emph{error} between advertised and true path distances\label{item:error}
    \item $L_s$: The expected cost of attacker \emph{success}\label{item:success}
\end{enumerate}
Cost \ref{item:distance} takes the expected value across source--destination pairs $u,v\sim\mathcal{D}$. While the path $p$ from $u$ to $v$ is determined using the observed weights $w^\prime$, the distance experienced by users is based on the true weights $w$. Thus, for a user traveling from $u$ to $v$, we use the path $p(G^\prime, w^\prime, u, v)$, which has length $\ell(G^\prime, w, p(G^\prime, w^\prime, u, v))$. Aggregating across all pairs, cost \ref{item:distance} is expressed as 
{\small\begin{equation}
    L_d(G, w, w^\prime, \mathcal{B}, \mathcal{D}, \mathcal{P})=\expect\left[\ell(G^\prime, w, p(G^\prime, w^\prime, u, v))\right],
\end{equation}}%
where the expectation is taken over 
\begin{equation}
   s,t\sim\mathcal{D}\textrm{ and }G^\prime\sim\mathcal{G}(G, w^\prime, \mathcal{B}, \mathcal{P}).\label{eq:distributions}
\end{equation}

Cost \ref{item:error} considers the same path as cost \ref{item:distance}, but rather than the distance traveled, the defender considers a function $\cerr$ of the error between the advertised and true path lengths. Then cost \ref{item:error} is given by 
\begin{equation}
    L_e(G, w, w^\prime, \mathcal{B}, \mathcal{D}, \mathcal{P})=\expect\left[\cerr(\ell^{\textrm{true}}, \ell^{\textrm{obs}})\right],\label{eq:errorLoss}
\end{equation}
where $\ell^{\textrm{true}}=\ell(G^\prime, w, p(G^\prime, w^\prime, u,v))$ and $\ell^{\textrm{obs}}=\ell(G^\prime, w^\prime, p(G^\prime, w^\prime, u,v))$, and the expectation is once again taken over (\ref{eq:distributions}).
The shape of $\cerr$ will vary based on the defender's belief about users' degree of dissatisfaction with errors in reported path lengths. Here we use the following $\cerr$ function, where $f_+,f_->0$ denote different marginal costs for over- or under-stating the path length, respectively:
\setlength{\abovedisplayskip}{6pt}
\setlength{\belowdisplayskip}{6pt}
\begin{align}
    \cerr(\ell^{\textrm{true}}, \ell^{\textrm{obs}})=\begin{cases}
    f_+(\ell^{\textrm{obs}}-\ell^{\textrm{true}}) &\textrm{ if } \ell^{\textrm{obs}}\geq\ell^{\textrm{true}}\\
    f_-(\ell^{\textrm{true}}-\ell^{\textrm{obs}}) &\textrm{ if } \ell^{\textrm{obs}}<\ell^{\textrm{true}}
    \end{cases}
\end{align}

Finally, cost~\ref{item:success} occurs if the attack is successful. The defender has a parameter $\lambda\geq 0$ that denotes the cost of attacker success. The cost to the defender is as follows, where $p^*=p(G^\prime, w^\prime, s_{p^*}, t_{p^*})$ only if $p^*$ is the \emph{unique} shortest path from $s_{p^*}$ to $t_{p^*}$:
\begin{align}
    L_s=\lambda\Pr[p^*=p(G^\prime, w^\prime, s_{p^*}, t_{p^*})].
\end{align}

If the only cost of an attack is the direct disruption to users accounted for in $L_d$ and $L_e$, then the defender sets $\lambda=0$. A pseudocode description of an algorithm for computing the cost is in Appendix~\ref{sec:defenderCost}.

\section{Optimization}
\label{sec:optimization}
We begin by formulating the optimization to solve Cut Defense. We then define a zero-sum version in which the defender's goal is to reduce cost given that the probability of attack is minimized. We propose a heuristic method that results in a feasible solution for a single target path, then extend its usage to multiple target paths. We finally derive a linear program for local optimization around a feasible point.

\subsection{Non-Convex Optimization Formulation}
\label{subsec:nonconvex}
We optimize cost while varying perturbed weights. Let $\wVec\in\reals_{\geq0}^{|E|}$ be the vector of original edge weights, where each edge is given an arbitrary index corresponding to its vector entry. The vector $\wVec^\prime$ contains the perturbed weights, $\cVec$ contains edge removal costs, and $\xVec_p$ is a binary indicator vector for path $p$, i.e., if the $i$th edge is in path $p$, the $i$th entry in $\xVec_p$ is 1, otherwise it is 0. Let $P(u, v)$ be the set of all paths from $u$ to $v$, $q_{uv}=\Pr_{D\sim\calD}[D=(u,v)]$ be the probability that a randomly selected user travels from $u$ to $v$, $P_t=\mathrm{support}(\calP)$ be all paths with nonzero probability of being the target, $W$ be a large value to denote edge removal, and $\calX(G, \wVec, p^*)$ be the set of attacks against graph $G$ with weights $\wVec$ to make $p^*$ be the shortest path between its terminal nodes. We solve Cut Defense by optimizing as follows:
{\small\begin{align}
    \hat{\wVec}^\prime=&\argmin_{\wVec^\prime}{\lambda\left(1-z_\emptyset\right)+\sum_{u,v\in V}L_d(u, v)+L_e(u,v)}\\
    \textrm{s.t.}&~L_d(u,v)=q_{uv}\cdot\sum_{p^*\in{P_t\cup\{\emptyset\}}}{z_{p^*}\cdot \ell_{uv,p^*}^\mathrm{true}}\label{eq:distance_cost}\\
    &~~~~~~~~~~~~~~~~~~~~~~~~\forall u,v\in V\nonumber\\
    &~L_e(u,v)=q_{uv}\cdot\sum_{p^*\in{P_t\cup\{\emptyset\}}}{z_{p^*}\cdot L_e(u,v,p^*)}\label{eq:error_cost_total}\\
    &~~~~~~~~~~~~~~~~~~~~~~~~\forall u,v\in V\nonumber
\end{align}}%
{\small\begin{align}
    &~L_e(u,v,p^*)=\left(f_+\cdot d_{uv,p^*}^\mathrm{pos} + f_-\cdot d_{uv,p^*}^\mathrm{neg}\right)\label{eq:error_cost_single}\\
    &~~~~~~~~~~~~~~~~~~~~~~~~\forall u,v\in V, p^*\in P_t\cup\{\emptyset\}\nonumber\\
    &~\Delta_{p^*}\in\calX(G, \wVec^\prime, p^*)\ \ \ \forall p^*\in P_t\label{eq:first_attack_const}\\
    &~\Delta_{\emptyset}=\zeros\label{eq:no_attack}\\
    &~\cVec^\top\Delta_{p^*}\leq \cVec^\top\Delta\ \ \ \forall \Delta\in\calX(G, \wVec^\prime, p^*), p^*\in P_t\label{eq:minDelta}\\
    &~z_{p^*}=\Pr(p^*)\sum_{i\geq\cVec^\top\Delta_{p^*}}{\Pr_{B\sim\calB}[B=i]}\ \ \ \forall p^*\in P_t\label{eq:attack_prob}\\
    &~z_\emptyset=1-\sum_{p^*\in P_t}z_{p^*}\label{eq:z0}\\
    &~z_{p^*}\geq 0\ \ \ \forall p^*\in P_t\cup\{\emptyset\}\label{eq:last_attack_const}\\
    &~p_{uv,p^*}=\argmin_{p\in P(u,v)}\xVec_p^\top(\wVec^\prime+W\Delta_{p^*})\label{eq:minpath}\\
    &~~~~~~~~~~~~~~~~~~~~~~~~\forall u,v\in V,p^*\in P_t\cup\{\emptyset\}\nonumber\\
    &~\ell_{uv,p^*}^\mathrm{true}=\xVec_{p_{uv,p^*}}^\top\wVec\ \ \ \forall u,v\in V,p^*\in P_t\cup\{\emptyset\}\label{eq:true_length}\\
    &~\ell_{uv,p^*}^\mathrm{obs}=\xVec_{p_{uv,p^*}}^\top\wVec^\prime\ \ \ \forall u,v\in V,p^*\in P_t\cup\{\emptyset\}\label{eq:obs_length}\\
    &~d_{uv,p^*}^\mathrm{pos},d_{uv,p^*}^\mathrm{neg}\geq 0\ \ \ \forall u,v\in V,p^*\in P_t\cup\{\emptyset\}\label{eq:diff_piece}\\
    &~d_{uv,p^*}^\mathrm{pos}-d_{uv,p^*}^\mathrm{neg}=\ell_{uv,p^*}^\mathrm{obs}-\ell_{uv,p^*}^\mathrm{true}\label{eq:diff_diff}\\
    &~~~~~~~~~~~~~~~~~~~~~~~~\forall u,v\in V,p^*\in P_t\cup\{\emptyset\}.\nonumber
\end{align}}%
Note that $p_{uv,p^*}$, defined in (\ref{eq:minpath}), is the shortest path from $u$ to $v$ according to the published weights after the attacker attacks when the target path is $p^*$. An explanation of each constraint is provided in Appendix~\ref{sec:NP_constraints}. The case where there is no attack is considered when $p^*=\emptyset$. 

One potential concern when calculating the expected cost across pairs of nodes is the possibility that the graph could become disconnected, leaving some inter-node distances infinite. However, due to the following theorem, we do not need to be concerned about this possibility. A proof and a discussion of practical issues are provided in Appendix~\ref{sec:connectedProof}.
\begin{theorem}
The optimal $\Delta_{p^*}$ in (\ref{eq:minDelta}) will not disconnect $G$.\label{thm:connected}
\end{theorem}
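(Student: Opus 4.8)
The plan is a local-modification argument by contradiction: if the optimal attack $E^\prime$ (the edge set whose indicator is $\Delta_{p^*}$) disconnected $G$, I would exhibit a strictly cheaper attack, contradicting the minimality imposed by (\ref{eq:minDelta}). Two facts come for free. First, since $\Delta_{p^*}\in\calX(G,\wVec^\prime,p^*)$, the path $p^*$ is a shortest $s$--$t$ path in $G^\prime=(V,E\setminus E^\prime)$; in particular $p^*$ is a path in $G^\prime$, so $s$ and $t$ lie in the same connected component of $G^\prime$. Second, every removal cost is strictly positive, so dropping a single edge from the attack strictly decreases its cost. Now suppose $G^\prime$ were disconnected (more generally, had strictly more components than $G$). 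Then some component $C$ of $G^\prime$ contains neither $s$ nor $t$, and because $G$ is connected there is at least one edge $e=\{a,b\}\in E$ with $a\in C$ and $b\notin C$; since $C$ is a full component of $G^\prime$, every such boundary edge --- in particular $e$ --- must lie in $E^\prime$.

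The technical heart is to show that adding $e$ back yields a still-valid attack. Let $G^{\prime\prime}=(V,(E\setminus E^\prime)\cup\{e\})$ and let $\Delta^{\prime\prime}$ be the indicator of $E^\prime\setminus\{e\}$. Since $G^\prime$ has no edge between $C$ and $V\setminus C$, the edge $e$ is a bridge of $G^{\prime\prime}$ whose deletion isolates $C$; as $s,t\in V\setminus C$, no \emph{simple} $s$--$t$ path in $G^{\prime\prime}$ can traverse $e$, since it would have to enter $C$ and leave again through that same lone edge. Because the weights $\wVec^\prime$ (and the $W$-penalized weights $\wVec^\prime+W\Delta$ used in (\ref{eq:minpath})) are nonnegative, some shortest $s$--$t$ path in $G^{\prime\prime}$ can be taken to be simple, hence avoids $e$ and therefore already lies in $G^\prime$; combined with $G^\prime\subseteq G^{\prime\prime}$, this shows the $s$--$t$ distance and the set of shortest $s$--$t$ paths coincide in $G^\prime$ and $G^{\prime\prime}$. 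Thus $p^*$ is still a shortest $s$--$t$ path in $G^{\prime\prime}$, i.e., $\Delta^{\prime\prime}\in\calX(G,\wVec^\prime,p^*)$, while $\cVec^\top\Delta^{\prime\prime}=\cVec^\top\Delta_{p^*}-c(e)<\cVec^\top\Delta_{p^*}$ --- contradicting (\ref{eq:minDelta}).

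I expect the main obstacle to be the directed case. The argument above carries over if ``component'' is read via the underlying undirected graph, provided one checks that a single directed edge between $C$ and $V\setminus C$, oriented either way, still cannot be used by a simple $s$--$t$ path. But for directed graphs, disconnection in the sense of reachability is genuinely more delicate --- severing a cheap in-edge can strand nodes without opening any short $s$--$t$ bypass --- so the theorem is best read, and most cleanly proved, for undirected connectivity structure. I would also record in the appendix the routine extensions: to inputs $G$ that are already disconnected (pick a split-off piece of a $G$-component that avoids both $s$ and $t$), together with a short discussion of how the conclusion interacts with the large-penalty model of removal in (\ref{eq:minpath}).
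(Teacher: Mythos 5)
Your proof is correct and follows essentially the same argument as the paper's: assume the optimal attack disconnects the graph, restore a removed edge bridging a stranded component, observe that no simple shortest $s$--$t$ path can traverse that edge (it would have to enter and exit the component over the same edge), and conclude that the strictly cheaper attack is still valid, contradicting (\ref{eq:minDelta}). Your version is somewhat more careful --- making explicit the strict positivity of removal costs, the reduction to simple paths via nonnegative weights, and the caveats for directed or initially disconnected graphs --- but the core idea is identical.
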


\subsection{Zero-Sum Formulation}
\label{subsec:zerosum}
In the prior section, we assumed a non-zero-sum game in which the optima for the attacker and defender may coincide. We gain additional insight into the problem by considering the zero-sum version of the problem, in which the defender's primary goal is ensuring the attack does not occur. In this case, we are given the same information as in Cut Defense except the cost of attack success $\lambda$. Instead, the defender manipulates the weights $w^\prime$ to minimize the probability of attack, i.e.,
{\small
\begin{align}
    z_\textrm{min}=&\min_{w^\prime}{\sum_{p^*\in\calP}\Big(\Pr_{P\sim\calP}[P=p^*]}\\
    &~~~~~~~~~~~~~~\cdot\Pr_{B\sim\calB}[c(E^\prime(G, w^\prime, p^*))\leq B]\Big).\nonumber
\end{align}}%
Within the minimized attack probability, however, the defender wants the cost to be as low as possible. Thus, the \emph{Zero-Sum Cut Defense} problem is given by
{\small\begin{align}
    \hat{w}^\prime=&\argmin_{w^\prime}L_d(G, w, w^\prime, \mathcal{B}, \mathcal{D}, \mathcal{P})\\
    &~~~~~~~~~~~~~+L_e(G, w, w^\prime, \mathcal{B}, \mathcal{D}, \mathcal{P})\nonumber\\
    \textrm{s.t.}&~z_\textrm{min}=\sum_{p^*\in\calP}\Big(\Pr_{P\sim\calP}[P=p^*]\\
    &~~~~~~~~~~~~~~~~~~~\cdot\Pr_{B\sim\calB}[c(E^\prime(G, w^\prime, p^*))\leq B]\Big).\nonumber
\end{align}}%
Note that $L_s$ is not considered in the objective in this formulation, since the attack probability is fixed at its minimum possible value. We show that this version of the problem is NP-hard.
\begin{theorem}
Zero-Sum Cut Defense is NP-hard.
\label{thm:hardness}
\end{theorem}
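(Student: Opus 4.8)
The plan is to give a polynomial-time reduction from a known NP-complete problem --- I will use 3-SAT, though a reduction from a covering problem such as Vertex Cover or Set Cover would serve equally well --- to the decision version of Zero-Sum Cut Defense: given an instance together with a bound $K$, is there a published weighting $w'$ that attains the minimum attack probability $z_\mathrm{min}$ and satisfies $L_d(w') + L_e(w') \le K$? It is worth noting at the outset why I would not reduce straight from Force Path Cut~\cite{Miller2021}, the attacker's own NP-hard problem: in the zero-sum setting a ``no'' Force Path Cut instance is the \emph{easy} case (no attack, so the true weights are optimal and the cost is just the true shortest-path distance), so a naive reduction would yield co-NP-hardness rather than NP-hardness. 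The clean route is to build an instance in which the defender can \emph{always} neutralize the attack --- forcing $z_\mathrm{min} = 0$ --- but only by distorting weights, so that the defender's real problem is how \emph{cheaply} neutralization can be done.

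Concretely, from a 3-SAT formula $\phi$ I would construct an instance with $\calP$ and $\calB$ point masses (target path $p^*$, budget $b$) and $\calD$ supported on a small set of node pairs, and with $f_+, f_-$ chosen large so that distorting advertised lengths is expensive. By construction a ``sledgehammer'' weighting --- inflating every edge of $p^*$ --- neutralizes the attack, certifying $z_\mathrm{min} = 0$, so the feasibility constraint reduces to ``$w'$ blocks the attack entirely.'' The graph is then wired with literal/clause gadgets so that blocking the attack forces a sequence of discrete choices (which bundles of edges to inflate), a consistent blocking choice exists iff $\phi$ is satisfiable, and such a choice keeps $L_d + L_e$ at most $K$ iff the corresponding assignment satisfies $\phi$, while every other blocking weighting (including the sledgehammer) distorts enough user-observed distances that $L_d + L_e > K$. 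Choosing $K$ accordingly, the decision answer is ``yes'' iff $\phi$ is satisfiable. Theorem~\ref{thm:connected} is used here to guarantee that the post-attack graph stays connected, so the distances entering $L_d$ and $L_e$ are finite and the objective is well defined.

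The main obstacle is the gadget construction, where two points need real care. The first is \emph{forcing the defender's hand}: since the defender can raise weights without bound, I must ensure no cheap blocking weighting exists outside the intended family --- this means arranging the clause gadgets so that ``locally cheap'' moves (e.g., inflating a single shared competitor path) always leave the attack viable through some other gadget, and only a globally consistent choice both blocks the attack and respects the budget $K$. The second is \emph{shortest-path uniqueness}: the attack model requires $p^*$ to become \emph{the unique} shortest $s$--$t$ path, and the induced graph distribution $\calG$ is defined only when the attacker's minimum-cost response is unique, so the construction must use generic (lexicographically perturbed) weights or short tie-breaking detours to rule out spurious ties under every weighting that matters. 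I would then verify both directions --- a satisfying assignment decodes to a blocking $w'$ with $L_d + L_e \le K$, and any $w'$ achieving $z_\mathrm{min} = 0$ with $L_d + L_e \le K$ decodes to a satisfying assignment --- and note that the construction is polynomial in the size of $\phi$, which establishes NP-hardness of Zero-Sum Cut Defense.
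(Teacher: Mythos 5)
Your high-level framing is right and matches the paper's: build an instance where the defender can always drive the attack probability to zero, so that the real question is how cheaply this can be done, and pose the decision version with a cost bound $K$. Your observation about why a direct reduction from Force Path Cut would go the wrong way is also a good one. But what you have written is a plan, not a proof: the entire substance of an NP-hardness argument is the gadget construction and the two-direction verification, and you explicitly defer both (``the main obstacle is the gadget construction, where two points need real care''). You describe what the clause/literal gadgets must accomplish --- no cheap blocking weighting outside the intended family, global consistency of the discrete choices, uniqueness of shortest paths --- without exhibiting gadgets that accomplish it. In this model the defender's only move is to inflate edge weights, and it is not at all obvious how to encode the \emph{consistency} constraints of 3-SAT (a variable must be set the same way in every clause it appears in) using only weight inflation; that is precisely where such a reduction could fail, and asserting that it can be arranged is not a proof.

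The paper avoids this difficulty by reducing from Knapsack instead, and it is worth seeing why that source problem fits so much more naturally. The construction is a chain of triangles, one per item: the target path $p^*$ runs along the base edges $\{u_{i-1},u_i\}$, and each triangle offers a parallel two-hop detour whose edges have removal cost $\nu_i$ (the item's value) and whose extra length is $\eta_i$ (the item's weight). Inflating a base edge ``activates'' that detour as a competing path the attacker must cut, adding $\nu_i$ to the attacker's required budget and $\eta_i$ to the user's traveled distance. With the attacker's budget set to $U-1$, blocking the attack at user cost at most $H$ is \emph{literally} the question of selecting items of total value at least $U$ and total weight at most $H$ --- no consistency gadgets, no tie-breaking machinery, and the independence of the triangles makes both directions of the equivalence routine (Lemmas~\ref{lem:attackCondition}--\ref{lem:reduction}). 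If you want to salvage your route, you would need to actually construct and verify the 3-SAT gadgets (which would, if successful, give the stronger conclusion of strong NP-hardness, since Knapsack is only weakly NP-hard); as it stands, the proposal has a genuine gap.
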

\begin{sketch}
We prove NP hardness via reduction from the Knapsack problem. Given a set of $n$ items with values $\nu_i\in\ints_+$ and weights $\eta_i\in\ints_+$, $1\leq i\leq n$, and two thresholds $U$ and $H$, the Knapsack problem is to determine whether there is a subset of items with total value at least $U$ with weight no more than $H$.
For each item, we create a triangle in a graph, where consecutive triangles share a node as shown in Fig.~\ref{fig:reduction}. The $i$th triangle consists of the nodes $u_{i-1}$, $u_i$, and $\omega_i$. Let $s=u_0$ and $t=u_n$. We create a Zero-Sum Cut Defense instance in which the support of $\calP$ consists of the single path from $s$ to $t$ that passes through no nodes $\omega_i$ for any $1\leq i\leq n$.  For all $i$, edge $\{u_{i-1}, u_i\}$ has weight 1 and removal cost 1, $\{u_{i-1}, \omega_i\}$ has weight 1 and cost $v_i$, and edge $\{\omega_i, u_i\}$ has weight $w_i$ and cost $v_i$. The adversary's budget is $U-1$ with probability 1. The defender's cost only considers traffic going from $s$ to $t$, i.e., $\Pr_{(x,y)\sim\calD}[(x,y)=(s,t)]=1$. Let $f_+=1$ and  $f_-=H^\prime=\sum_{i=1}^n{w_i}$.
\begin{figure}
    \centering
    \includegraphics[width=0.49\textwidth]{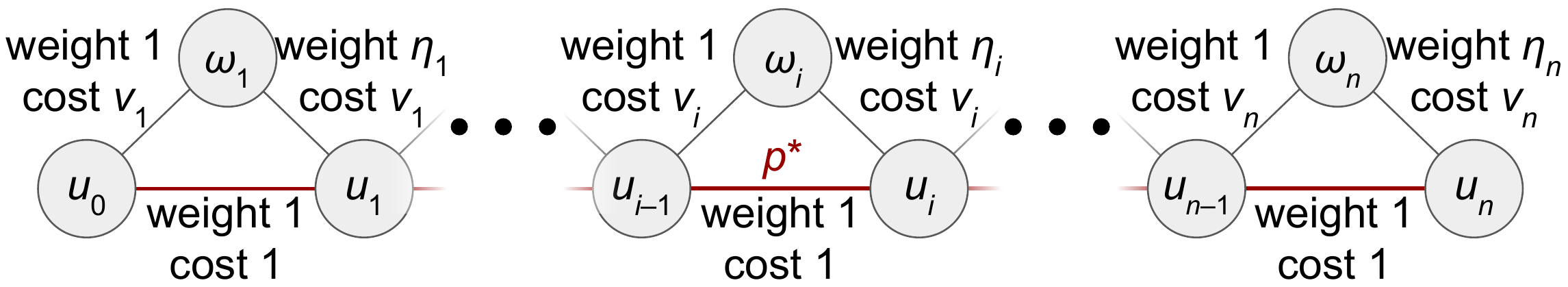}
    \caption{Reduction from Knapsack to Zero-Sum Cut Defense. The $i$th item in the set corresponds to a triangle $\{u_{i-1}, \omega_i, u_i\}$. All target paths go from $s=u_0$ to $t=u_n$. The target path $p^*$ traverses the bottom edges on the figure, highlighted in red. Keeping defender cost low while ensuring the probability of attack is 0 is equivalent to keeping the knapsack's weight low while ensuring the value is sufficient.}
    \label{fig:reduction}
\end{figure}

Since the adversary's budget is $U-1$, in order to minimize the attack probability (in this case, make it 0), the defender must force some subset of edges along $p^*$ to have length at least the same as the two-hop paths running parallel to them. The removal costs on the parallel paths of these edges must sum to at least $U$: the defender's ``value'' is an increased cost for the attacker. The increase in distance traveled by the user will be commensurate with the weights of the items associated with the perturbed edges. This provides a direct mapping between solving Knapsack and solving Zero-Sum Cut Defense on the generated graph.
 A detailed proof is provided in Appendix~\ref{sec:proof}.
\end{sketch}

Although the problem cannot be efficiently solved in general, we find feasible points fairly easily by increasing the length of $p^*$ until the cost of the attack is sufficiently high. Starting with weights $w^\prime$ initialized to the true weights $w$, the procedure is as follows:
\begin{enumerate}
    \item $E^\prime\gets$ attack$(G, w^\prime, p^*)$
    \item $p\gets$ 2nd shortest path between the terminals of $p^*$, if it exists
    \item pick an edge $e$ from $E_{p^*}\setminus E_p$ 
    \item increase $w^\prime(e)$ by $\delta=\ell(G, w^\prime, p)-\ell(G, w^\prime, p^*)$
\end{enumerate}
Here $E_p$ is the set of edges on path $p$. This procedure continues until either $p^*$ is the longest path between its terminals or $c(E^\prime)$ exceeds the largest possible attack budget. This procedure yields a feasible point given the attacker's algorithm. If we continue until $p^*$ is the longest path, we must be at a feasible point: all other paths that connect $p^*$'s endpoints need to be cut. This observation yields the following theorem.
\begin{theorem}
When $\calP$ consists of a single path $p^*$, the procedure above yields a feasible point for Zero-Sum Cut Defense. 
\end{theorem}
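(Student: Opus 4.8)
\begin{sketch}
The plan is to check directly that the published weights $w'$ returned at termination satisfy the single constraint of Zero-Sum Cut Defense; with a one-point support for $\calP$ this constraint reduces to the requirement that the attack probability $\Pr_{B\sim\calB}[c(E'(G,w',p^*))\le B]$ equals its minimum $z_\mathrm{min}$ over all weightings (here $E'(G,w',p^*)$ is the attacker's cost-minimal attack, cf.\ constraint (\ref{eq:minDelta})). The procedure halts only in one of two states. If it halts because $c(E'(G,w',p^*))>b_\mathrm{max}:=\max(\mathrm{support}(\calB))$, then this attack probability is $0$, which must equal $z_\mathrm{min}$ since attack probabilities are nonnegative, so $w'$ is feasible. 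The substantive case is the other one.

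Suppose the procedure halts because $p^*$ is a longest $s$--$t$ path under $w'$. The key is a weight-free description of the attack set at this point: an edge set $E'$ (necessarily with $E'\cap E_{p^*}=\emptyset$, or else $p^*$ is not a surviving path) makes $p^*$ the unique shortest $s$--$t$ path in $G-E'$ iff $p^*$ is the \emph{only} $s$--$t$ path in $G-E'$, since any other surviving path has length at most $\ell(G,w',p^*)$ and would prevent $p^*$ from being uniquely shortest. Hence at the terminal point $\calX(G,w',p^*)$ equals the family $\mathcal{C}$ of edge sets whose removal isolates $p^*$ as the sole $s$--$t$ path, and $\mathcal{C}$ does not depend on the edge weights. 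Next, for \emph{every} weighting $\tilde w'$ we still have $\mathcal{C}\subseteq\calX(G,\tilde w',p^*)$ (if $p^*$ is the only $s$--$t$ path it is trivially the unique shortest), so $c(E'(G,\tilde w',p^*))=\min_{E'\in\calX(G,\tilde w',p^*)}c(E')\le\min_{E'\in\mathcal{C}}c(E')=c(E'(G,w',p^*))$. In words, forcing $p^*$ to be a longest path imposes the maximum possible burden on the attacker---every competing path must be cut---so no weighting makes the attacker's cheapest attack cost more than it does at the terminal $w'$. Since $c\mapsto\Pr_{B\sim\calB}[c\le B]$ is nonincreasing, the terminal $w'$ minimizes the attack probability and thus attains $z_\mathrm{min}$; it is feasible (and, as a byproduct, $z_\mathrm{min}$ is attained).

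Two points need care. The minor one is fixing the meaning of ``shortest path'' to be \emph{uniquely} shortest, consistent with how $p(\cdot)$ and $L_s$ are used above; this is exactly what makes the identity $\calX(G,w',p^*)=\mathcal{C}$ hold at the terminal state regardless of length ties. The real obstacle is showing the procedure reaches one of the two halting states in finitely many iterations: each step raises $\ell(G,w',p^*)$ by $\delta=\ell(G,w',p)-\ell(G,w',p^*)\ge0$, and I would need this increase to be strictly positive whenever the procedure has not halted (otherwise it could stall) and then to argue that after finitely many increases $p^*$ surpasses every other $s$--$t$ path unless $c(E')$ has already crossed $b_\mathrm{max}$. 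The delicate part is that raising the weight of an edge $e$ of $p^*$ also lengthens every other $s$--$t$ path through $e$; here one uses that $e$ is chosen off the current second-shortest path $p$, and that for every competitor $q\neq p^*$ some edge of $p^*$ avoids $q$ (a simple $s$--$t$ path whose edge set contains $E_{p^*}$ must be $p^*$ itself), so $p^*$ can always be driven to gain on $q$.
\end{sketch}
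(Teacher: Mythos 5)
Your feasibility argument is correct and is essentially the paper's own (very terse) justification made precise: the paper only observes that once $p^*$ is the longest path between its terminals, every competing path must be cut, and your formalization via the weight-independent cut family $\mathcal{C}$ with $\mathcal{C}\subseteq\calX(G,\tilde w^\prime,p^*)$ for every $\tilde w^\prime$, combined with the monotonicity of $c\mapsto\Pr_{B\sim\calB}[c\leq B]$, is exactly the right way to say it. The other halting case (attack cost exceeding the maximal budget, hence attack probability $0=z_{\min}$) is likewise handled as the paper intends.

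The termination obstacle you flag is real in the sense that the paper never addresses it, but it closes without much difficulty, and you already have the ingredients. Under the unique-shortest-path semantics (which the paper adopts for $L_s$ and which \texttt{PATHATTACK} enforces), a successful attack leaves $\ell(G^\prime,w^\prime,p)>\ell(G^\prime,w^\prime,p^*)$ strictly, so $\delta>0$ and the procedure cannot stall. For progress, track the set $S=\{q\in P(s,t):\ell(G,w^\prime,q)\leq\ell(G,w^\prime,p^*)\}$ of paths the attacker is obliged to cut. The incremented edge $e$ lies on $p^*$ but not on the second-shortest survivor $p$, so after the increment $\ell(p^*)$ rises by exactly $\delta$ while every $q\in S$ rises by either $\delta$ (if $e\in E_q$) or $0$; hence $S$ is preserved, and $p$ (whose length is unchanged and now ties $p^*$) joins it. So $S$ grows strictly at every iteration, and finiteness of $P(s,t)$ forces the procedure to reach one of its two halting states. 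With that supplied, your proof is complete and matches the paper's intended argument.
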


While having multiple possible target paths complicates the problem, we use a similar principle to reduce the probability of attack. For each target path, we increase the edge weights as described above. We then apply the attack and use the number of edges removed to calculate the attack probability. Then, starting from the original graph, we consider the target paths in order of increasing attack probability. We then increase the edge weights on each target path again, accumulating the new weights each time. This prioritizes the path at the end of the sequence, which has the highest probability of resulting in an attack.


\subsection{Heuristic Method}
From the zero-sum case, we see that increasing the weights on target paths is an effective strategy. Taking this as inspiration, we propose a heuristic algorithm that iteratively chooses an edge $e$ from some target path $p^*$ and increments its weight to add another path to $P_{p^*}$. We call this algorithm \texttt{PATHDEFENSE} (see Algorithm~\ref{alg:heuristic}). At each iteration, the algorithm considers edges on which the smallest possible weight increase will provide one target path with a new competing path. For a given target path $p^*$, these edges are identified by applying the attack and finding the second-shortest path $p$ between the source and destination of $p^*$, if such a path exists. The edges in $p^*$ that are not part of $p$ may be incremented to add $p$ as a competing path that must be cut to make $p^*$ shortest (see Algorithm~\ref{alg:increment}). 
The attack probability is evaluated after considering each possible perturbation, and whichever perturbation results in the smallest attack probability is kept. If multiple perturbations result in the same attack probability, the edge is chosen that maximizes the average length of $p^*$. This procedure continues until (1) all target paths are the longest between their terminals, or (2) a threshold is reached in terms of cost, attack probability, or number of iterations. 
\setlength{\floatsep}{3pt}        
\setlength{\textfloatsep}{3pt}    
\setlength{\intextsep}{3pt}       
\begin{algorithm}[H]
\small
\caption{\small\texttt{get\_edge\_increments}}
\label{alg:increment}
\textbf{Input}: graph $G=(V, E)$, perturbed weights $w^\prime$, $p^*$ dist. $\calP$

\textbf{Output}: weight increment set $R$

\begin{algorithmic}[1] 
\STATE $R\gets\emptyset$
\FORALL{$p^* \in P_t$}
\STATE $\Etemp\gets\textrm{attack}(G, w^\prime, p^*)$
\STATE $G^\prime\gets (V, E\setminus\Etemp$)
\IF{$p^*$ is not the only path from $s$ to $t$ in $G^\prime$}
\STATE $p\gets$ 2nd shortest path from $s$ to $t$ in $G^\prime$ using $w^\prime$
\STATE $\delta\gets\ell(G^\prime, w^\prime, p)-\ell(G^\prime, w^\prime, p^*)$
\FORALL{$e\in E_p\setminus E_{p^*}$}
\STATE $R\gets R\cup\{(e, \delta)\}$
\ENDFOR
\ENDIF
\ENDFOR
\RETURN $R$
\end{algorithmic}
\end{algorithm}

\begin{algorithm}[tb]
\small
\caption{\small\texttt{PATHDEFENSE} Heuristic Defense Algorithm}
\label{alg:heuristic}
\textbf{Input}: graph $G=(V, E)$, true weights $w$, budget dist. $\mathcal{B}$, $p^*$ dist. $\calP$, $(u,v)$ pair dist. $\mathcal{D}$, attack cost $\lambda$, cost threshold $\epsilon_c$, attack prob. threshold $\epsilon_a$, max. iterations $i_{\max}$

\textbf{Output}: perturbed weights $w^\prime$

\begin{algorithmic}[1] 
\STATE $w^\prime\gets w$
\REPEAT
\STATE $R\gets\texttt{get\_edge\_increments}(G, w^\prime, \calP)$
\STATE $(e_{\min}, \delta_{\min}, z_{\min}, \ell_{\min})\gets(\emptyset, 0, 2, 0)$
\FORALL{$(e,\delta)\in R$}
\STATE $w^\prime(e)\gets w^\prime(e)+\delta$
\STATE $z\gets0$
\FORALL{$p^*\in P_t$}
\STATE $\Etemp\gets\textrm{attack}(G, w^\prime, p^*)$
\STATE $z\gets z+\Pr_{P\sim\calP}[P=p^*]\Pr_{B\sim\calB}[B\geq c(\Etemp)]$
\IF{$z < z_{\min}$ \OR($z=z_{\min}$ \AND $\expect_{P\sim\calP}\left[\ell(G, w^\prime, P)\right]>\ell_{\min}$)}
\STATE $e_{\min}\gets e$
\STATE $\delta_{\min}\gets\delta$
\STATE $z_{\min}\gets z$
\STATE $\ell_{\min}\gets\expect_{P\sim\calP}\left[\ell(G, w^\prime, P)\right]$
\ENDIF
\ENDFOR
\STATE $w^\prime(e)\gets w^\prime(e)-\delta$
\ENDFOR
\STATE $w^\prime(e_{\min})\gets w^\prime(e_{\min})+\delta_{\min}$
\STATE $L_d, L_e, L_s\gets$cost$(G, w, w^\prime, \calB, \calP, \calD, \lambda)$
\STATE $i\gets i+1$
\UNTIL{$L_d+L_e+L_s<\epsilon_c$ \OR $z_{\min} < \epsilon_a$ \OR $|R|=0$ \OR $i\geq i_{\max}$}
\RETURN $w^\prime$

\end{algorithmic}
\end{algorithm}

\subsection{Local Optimization Around a Feasible Point}
Once a feasible point is identified, we relax the hardest constraints to formulate a linear program for local optimization. In this case, we fix the attack that occurs for each $p^*$, and ensure that the observed shortest path between each pair of nodes remains the same as the weights are varied. By fixing the attack, we are given a value for $\Delta_{p^*}$ and $z_{p^*}$, thus removing constraints (\ref{eq:first_attack_const})--(\ref{eq:last_attack_const}) from the nonconvex optimization in Section~\ref{subsec:nonconvex}. By fixing the shortest path, we are given a value for $p_{uv,p^*}$, replacing  constraint (\ref{eq:minpath}) with

\begin{align}
    \ell_{uv,p^*}^\mathrm{obs}&\leq\xVec_{p}^\top\left(\wVec^\prime+W\Delta_{p^*}\right)\\
    &~~~\forall u,v\in V,p^*\in P_t\cup\{\emptyset\}, p\in P(u,v).\nonumber
\end{align}
All remaining constraints in the nonconvex program are linear. This is not, however, sufficient to locally optimize: The attack $\Delta_{p^*}$ must be both necessary (not cut superfluous edges) and sufficient (cut all paths that compete with $p^*$). To optimize within this context, we add the constraints

\begin{align}
    \xVec_{p^*}^\top\wVec^\prime&\leq\xVec_{p}^\top(\wVec^\prime+W\Delta_{p^*})-\epsilon_{p^*}\label{eq:sufficient}\\
    &~~~\forall p^*\in P_t,p\in P(s_{p^*}, t_{p^*})\nonumber\\
    \xVec_{p^*}^\top\wVec^\prime&\geq\xVec_{p}^\top\wVec^\prime\ \ \ p^*\in P_t,p\in P_{p^*}\label{eq:necessary}
\end{align}
Here $P_{p^*}$ is the set of paths competing with $p^*$ to be the shortest path from $s_{p^*}$ to $t_{p^*}$.
To ensure sufficiency, (\ref{eq:sufficient}) constrains all paths between the terminals of a target path $p^*$ to be strictly longer than $p^*$. The additional variables $\epsilon_{p^*}$ may be measured based on the difference in lengths between $p^*$ and the second-shortest path after the attack. Constraint (\ref{eq:necessary}) ensures necessity by making all paths that competed with $p^*$ at the feasible point remain competitive. Since all constraints are linear, we use constraint generation to explicitly state only a subset of the necessary constraints~\cite{Ben-Ameur2006}.

\section{Experiments}
\label{sec:experiments}
We demonstrate the optimization procedure with 4 synthetic network generators and 4 real networks. All synthetic networks have 250 nodes and an average degree of approximately 12. We use Erd\H{o}s--R\'{e}nyi (ER) random graphs, Barab\'{a}si--Albert (BA) preferential attachment graphs, Watts--Strogatz (WS) small-world graphs, and stochastic blockmodel (SBM) graphs where nodes are separated into communities of size 200 and 50. To provide some variation in edge weights, all edges are given weights drawn from a Poisson distribution with rate parameter 20. Removal costs are set to 1 for all edges.
\begin{figure*}[ht]
    \centering
    \begin{subfigure}[]{}
    \centering
    \includegraphics[width=0.24\textwidth]{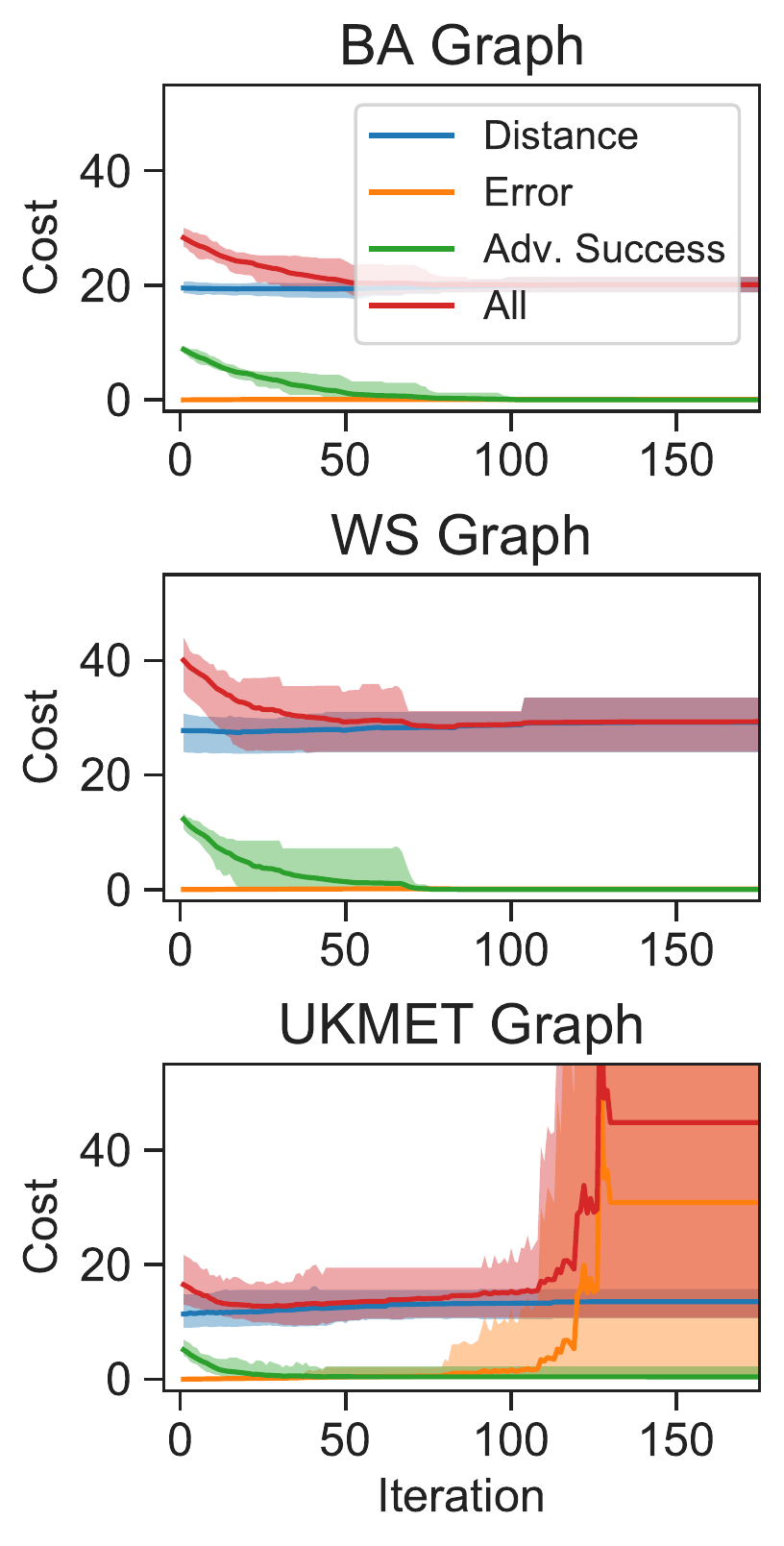} 
    \end{subfigure}
    \hfill
    \begin{subfigure}[]
    \centering
        \includegraphics[width=.7\textwidth]{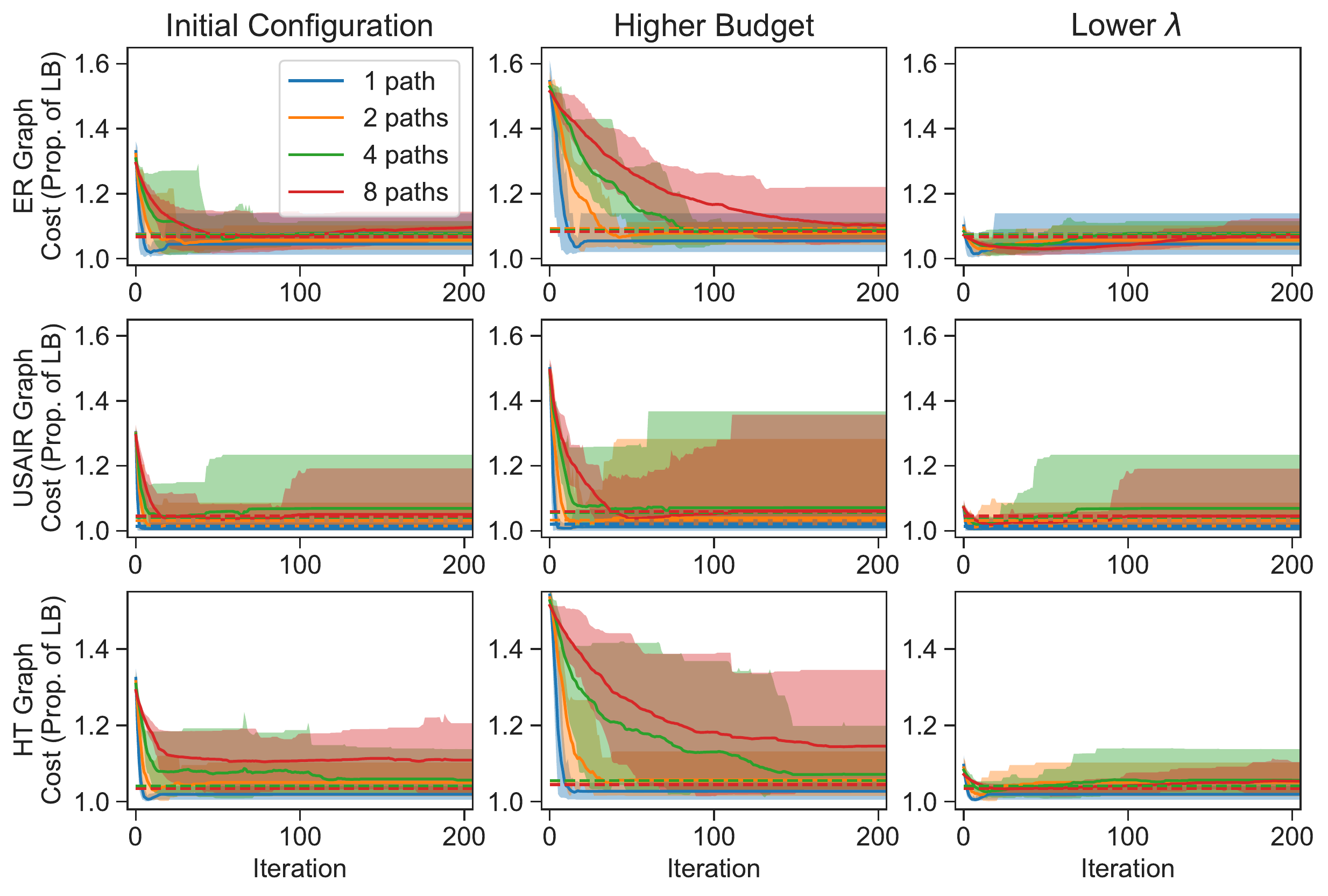}
    \end{subfigure}
    \caption{Cost of \texttt{PATHDEFENSE} when all target paths share terminals. Lower cost is better for the defender. Plots include the average cost (solid line) and the cost range across trials (shaded area). (a) Cost broken down by component for BA, WS, and UKMET graphs with 4 target paths. There is a substantial reduction in cost due to the probability of adversary success being reduced, and the cost due to errors in published distances is minimal for BA and WS, whereas $L_e$ increases a substantial amount in the UKMET data, as it is difficult to avoid traversing perturbed edges. (b) Defender costs, normalized by a lower bound, for ER (top), USAIR (middle), and HT (bottom) graphs. Results are shown for the original budget and $\lambda$ (left), when the attacker budget is doubled (center), and when $\lambda$ is reduced by five times (right). The average zero-sum result is also shown (dashed line). As expected, increasing the adversary's budget results in slower convergence, and decreasing the attack success cost reduces the improvement provided by \texttt{PATHDEFENSE}.}
    \label{fig:results}
\end{figure*}

The real network datasets include 2 transportation networks, 1 social network, and 1 computer network. The transportation networks are United States airports (USAIR), where edge weights are the number of seats on flights between airports~\cite{Colizza2007}, and United Kingdom metro stops (UKMET), where weights are travel times between stops in minutes~\cite{Gallotti2015}. The social network is attendees at the 2009 ACM Hypertext Conference (HT), where weights are the number of face-to-face interactions during the conference~\cite{Isella2011}. The computer network is an autonomous system (AS) graph~\cite{Leskovec2005}, with weights from a Poisson distribution as in the synthetic networks.
Weights for USAIR and HT are inverted to create distances rather than similarities. Statistics and links to the datasets are in Appendix~\ref{sec:datasets}. 

\subsection{Experimental Setup}
For each experiment on a given dataset, to evaluate the effects of varying the number of target paths, we choose 1, 2, 4, or 8 target paths. Source--destination pairs are chosen uniformly at random and the target paths include the 5th shortest and every second path thereafter until the desired number of targets is reached. In some cases, all target paths have the same terminal nodes; in others, we choose independently for each path. For SBM and AS graphs, we also consider the case where the two terminal nodes are from one community of nodes, but the target path traverses nodes in another one. (We call this an \emph{extra-community} path.) This emulates a scenario where an outside attacker wants the traffic to take a relatively unnatural path, e.g., computer traffic unnecessarily crossing national boundaries.

In each experiment, $\calB$ is a Poisson distribution whose rate parameter is the average number of edges removed by the attack across all target paths. For the attack, we use the version of \PATHATTACK{} that solves a relaxed linear program and randomly rounds to get an integer solution~\cite{Miller2023}. The source--destination distribution emphasizes the portions of the graph with target paths: with probability 0.5, we draw two nodes both either on a target path or on the true shortest path between its endpoints, and with probability 0.5 we do not. (Pairs are uniformly distributed within each category.) For each setting, results are aggregated across 10 trials. The cost of attacker success $\lambda$ is set to one half of the cost based on distances alone when there is no attack.
 We used a CentOS Linux cluster with 32 cores per machine for our experiments. Each job was allocated 10 GB of memory. We used Python 3.8.1 with Gurobi 9.5.1 (\url{https://www.gurobi.com}) for optimization and NetworkX 2.4 (\url{https://networkx.org}) for graph analysis.

\subsection{Results}
\label{subsec:results}
We first consider how the three components of the defender's cost vary when running \texttt{PATHDEFENSE}. Representative results are shown in Fig.~\ref{fig:results}(a). Cost is typically dominated by the true distance traveled by users. While the traffic is primarily on the portion of the graph affected by the attack, the impact of errors is negligible in comparison. One reason for this phenomenon is that increasing edge weights discourages their use: when a path looks longer, fewer users will take it and it will not be considered in the cost. In the UKMET case, however, this changes after about 100 iterations, when the cost from errors drastically increases. The metro graph is somewhat tree-like, making it difficult to avoid traversing perturbed edges. In all cases, the overall reduction in cost comes from a large reduction in the probability of attack and a small increase in the average distance traveled. 

The cost of \texttt{PATHDEFENSE} for three additional datasets is shown in Fig.~\ref{fig:results}(b). The plots include cases where the rate parameter of the budget distribution is doubled and where the cost of adversary success is reduced by a factor of five. We report all costs as a proportion of a lower bound: the cost when there is no attack and no perturbation. When the attacker's budget is doubled, the initial defender cost is much larger, but the cost eventually obtained by \texttt{PATHDEFENSE} is very similar. It is typical for \texttt{PATHDEFENSE} to outperform the zero-sum case at an early iteration when there are few target paths, though this does not always happen. When there are 8 target paths in the HT graph, the zero-sum procedure produces a better result than \texttt{PATHDEFENSE}. This may be due to the clustering that exists in the social network: it may promote oscillation between competing paths, whereas the zero-sum method focuses on one path at a time. Results on all
datasets are provided in Appendix~\ref{sec:moreResults}. Results compar-
ing with a simple baseline defense and using alternative
attack methods are provided in Appendix~\ref{sec:baseline}. 

Fig.~\ref{fig:community} shows results where the attacker targets a path that exits and re-enters a community. The lowest relative cost is higher in this case than when paths are chosen by enumerating consecutive shortest paths, which is consistent with intuition. In the SBM graph with extra-community target paths, we again see that the zero-sum method yields lower cost than \texttt{PATHDEFENSE}, suggesting that optimizing each target path in sequence is effective in this case as well.  
\begin{figure}
    \centering
    \includegraphics[width=0.45\textwidth]{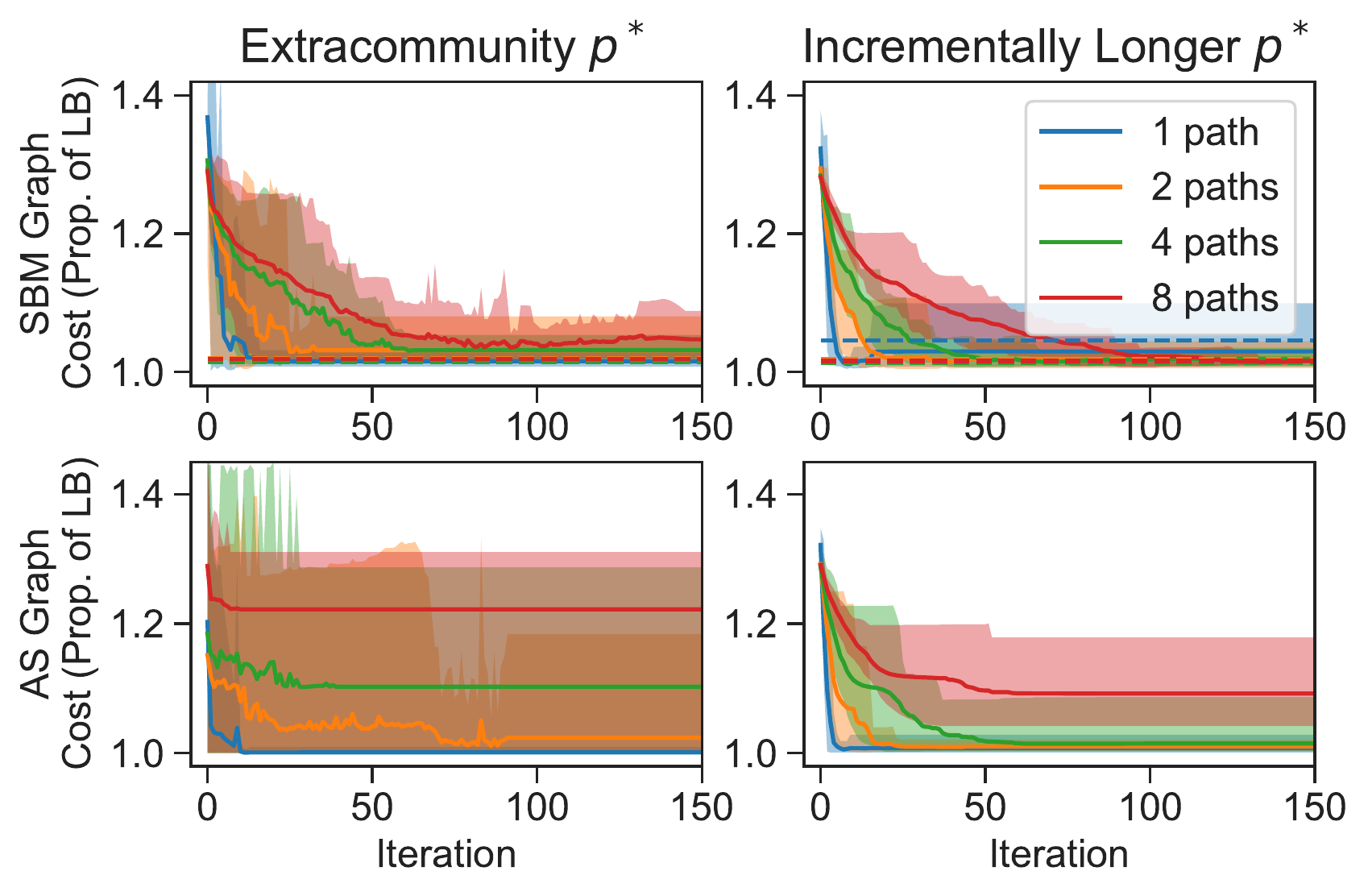}
    \caption{Results when the attacker targets an extra-community path (left), in comparison to when targets are incrementally longer short paths (right), in SBM (top) and AS (bottom) graphs. Plots include the average cost (solid line) and the cost range across trials (shaded area), as well as the average zero-sum result (dash line). (The zero-sum procedure is only reported for SBM; it did not complete within 24 hours for AS.) \texttt{PATHDEFENSE} yields lower relative cost in the case where the target path is incrementally longer than the true shortest path than in the extra-community case.}
    \label{fig:community}
\end{figure}

\section{Related Work}
\label{sec:related}
The problem of releasing a graph that can be useful while not revealing sensitive information has received much attention since the discovery of the problem of deanonymization~\cite{Backstrom2007}. Much of this research has been concerned with privacy-preserving release of social network data, where nodes are anonymized w.r.t.~topological features such as degree~\cite{Liu2008}, neighborhood~\cite{Zhou2008}, or cluster~\cite{Bhagat2009}.
 Sharing of sensitive graph data has been studied in the context of differential privacy~\cite{Dwork2006}. 
Sealfon~\cite{Sealfon2016} applied differential privacy to graph weights in the context of computing shortest paths.
Other methods have considered differential privacy for unweighted graphs. For example, a data-driven low-dimensional projection~\cite{Arora2019} and random low-dimensional projections~\cite{Blocki2012} have been applied for cut queries, i.e., calculating the number of edges that must be removed to disconnect two sets of vertices. Other  work does not necessarily preserve distances between pairs of nodes, but maintains the distribution of distances~\cite{Chen2014}.

Outside of differential privacy, there is work on reliably finding shortest paths when a graph is located on an untrusted server~\cite{Gao2011}. In other work, an actor wants to ``buy'' a path from $s$ to $t$, with the prices only known to current owners~\cite{Archer2007}. In this setting, a buyer can be forced to overpay for the path~\cite{Elkind2004}, which is similar to the Cut Defense goal of forcing an attacker to expend extra resources, though in a different data access mechanism.

The \PATHATTACK{} algorithm is an example of inverse optimization~\cite{Ahuja2001}, and specifically the inverse shortest path problem~\cite{Zhang1995}: rather than optimizing the path length for a given graph, we change the graph to make a certain path the shortest. Inverse shortest path problems have proven useful in various navigation scenarios~\cite{Brandao2021,LaFontaine2022}. Other research has considered inverse shortest path lengths~\cite{Tayyebi2016} and other inverse optimizations, such as max flow/min cut~\cite{Liu2006,Jiang2010,Deaconu2020}. 
%
 Recent work in adversarial machine learning has expanded to attacking graph structure to alter vertex classification outcomes~\cite{Zugner2018} or vertex embeddings~\cite{Bojchevski2019}. Other work has aimed to defend against such attacks by, for example, using a low-rank representation of the graph~\cite{Entezari2020} or filtering based on node attribute similarity~\cite{WuH2019}. Other adversarial graph analysis research has focused on limiting influence in networks~\cite{Medya2020} and evading centrality measures~\cite{Waniek2021}. 

\section{Conclusion and Future Work}
We present a framework and algorithms for defending against shortest path attacks. We formulate the defense as a Stackelberg game in which the defender alters the weights of the graph before the attacker removes edges to make the target path shortest. The defender's cost includes components to limit the average distance traveled by users, the error in the published distances, and the probability of attacker success. We show that the zero-sum version of this problem is NP-hard and provide a greedy edge weight increment procedure to find a feasible point. Using this same procedure, we propose the \texttt{PATHDEFENSE} algorithm and apply it to several real and synthetic datasets. Across a wide set of experiments, we observe that \texttt{PATHDEFENSE} reduces the attack probability to a negligible level (typically less than $10^{-6}$) while only slightly increasing the cost borne by users (by less than 5\% in over 87\% of cases).

There is currently no performance guarantee for \texttt{PATHDEFENSE}. Future work in this area should focus on obtaining such a guarantee, perhaps by narrowing the scope of the problem (e.g., single target path, weights only increased). Lastly, using a parallel method that exploits sparseness~\cite{Yang2023} could be key to applying \texttt{PATHDEFENSE} at scale. 
\section*{Acknowledgement}
BAM was supported by the United States Air Force under Contract No. FA8702-15-D-0001. TER was supported in part by the Combat Capabilities Development Command Army Research Laboratory (under Cooperative Agreement No.~W911NF-13-2-0045) and by the Under Secretary of Defense for Research and Engineering under Air Force Contract No.~FA8702-15-D-0001. YV was supported by grants from the Office of Naval Research (N00014-24-1-2663) and National Science Foundation (CAREER Award IIS-1905558, IIS-2214141, and CNS-2310470). Any opinions, findings, conclusions or recommendations expressed in this material are those of the authors and should not be interpreted as representing the official policies, either expressed or implied, of the funding agencies
 or the U.S.~Government. The U.S.~Government is authorized to reproduce and distribute reprints for Government purposes not withstanding any copyright notation here on.

\bibliographystyle{siam}
\bibliography{bibfile}
\appendix
\section{Notation Table}
\label{sec:notation}
Table~\ref{tab:notation} lists the notation used in the paper.
\begin{table*}
    \centering
    \begin{tabular}{|c|l|}
         \hline
         \textbf{Symbol}& \textbf{Meaning} \\
         \hline
         $G$ & graph\\
         \hline
         $V$ & vertex set \\
         \hline
         $E$ & edge set\\
         \hline
         $N$ & number of vertices\\
         \hline
         $M$ & number of edges\\
         \hline
         $s$ & source vertex\\
         \hline
         $t$ & destination vertex\\
         \hline
         $p^*$ & adversary's target path  \\
         \hline
         $b$ & adversary's budget\\
         \hline
         $\calD$ & distribution of source/destination pairs \\
         \hline
         $\calP$ & distribution of target paths\\
         \hline
         $\calB$ & distribution of attacker's budget \\
         \hline
         $\calG$ & distribution of graphs a user might see \\
         \hline
         $\calX(G, w, p^*)$ & set of attacks against graph $G$ with weights $w$ that make\\
         & $p^*$ the shortest path between its endpoints\\
         \hline
         $w(e)$ & true edge weight function $w:E\rightarrow\reals_{\geq0}$\\
         \hline
         $w^\prime(e)$ & published weight function $w:E\rightarrow\reals_{\geq0}$\\
         \hline
         $c(e)$ & edge removal cost function $c:E\rightarrow\reals_{+}$\\
         \hline
         $\reals_{\geq 0}$&set of nonnegative real numbers\\
         \hline
         $\wVec$ &true edge weight vector\\
         \hline
         $\wVec^\prime$ &published edge weight vector\\
         \hline
         $\cVec$ & edge removal cost vector\\
         \hline
         $f_+$ &defender's marginal cost of overestimating user path length\\
         \hline
         $f_-$ &defender's marginal cost of underestimating user path length\\
         \hline
         $\cVec$ & vector of edge removal costs\\
         \hline
         $\Delta$ & binary vector representing edges cut \\
         \hline
         $\xVec_p$ & binary vector representing edges in path $p$ \\
         \hline
         $\ell(G, w, p)$ & length of path $p$ in graph $G$ with weights $w$\\
         \hline
         $p(G, w, s, t)$ & shortest path from $s$ to $t$ in $G$ using weights $w$\\
         \hline
         $P_{p^*}$ & set of paths with the same source and destination as $p^*$ that\\
         &are no longer than $p^*$\\
         \hline
         $P_{t}$ & set of target paths in $\calP$\\
         \hline
         $E_{p}$ & set of edges on path $p$\\
         \hline
         $(\cdot)^\top$ & matrix or vector transpose\\
         \hline
         $\langle k\rangle$ & average degree in $G$\\
         \hline 
         $\sigma_k$ & standard deviation of node degrees in $G$ \\
         \hline 
         $\kappa$ & global clustering coefficient of $G$ \\
         \hline
         $\tau$ & transitivity of $G$ \\
         \hline
         $\triangle$ & number of triangles in $G$ \\
         \hline
    \end{tabular}
    \caption{Notation used throughout the paper.}
    \label{tab:notation}
\end{table*}

\section{Computing Defender Cost}
\label{sec:defenderCost}
Psuedocode to compute the defender cost is shown in Algorithm~\ref{alg:cost}.

\begin{algorithm}
\caption{Compute defender cost}
\label{alg:cost}
\textbf{Input}: graph $G=(V, E)$, true weights $w$, published weights $w^\prime$, edge removal costs $c$, budget dist. $\mathcal{B}$, $p^*$ dist. $\mathcal{P}$, $(s,t)$ pair dist. $\mathcal{D}$, attack cost $\lambda$\\
\textbf{Output}: defender cost $L$
\begin{algorithmic}[1] 
\STATE $L\gets 0$
\FORALL{$(s,t)\in V\times V$}
\STATE $p\gets p(G, w^\prime, s, t)$  \COMMENT{all-pairs shortest path}
\STATE $d_{G, s,t}\gets\ell(G, w, p)$
\STATE $d^\prime_{G, s,t}\gets\ell(G, w^\prime, p)$
\ENDFOR
\FORALL{$p^*$ in $\mathcal{P}$}
\STATE $\ppath\gets\Pr_{P\sim\mathcal{P}}[P=p^*]$
\STATE $E^\prime\gets\textrm{attack}(G, w^\prime, p^*)$
\STATE $\btemp\gets c(E^\prime)$
\STATE $\pattack\gets\Pr_{B\sim\mathcal{B}}[B\geq\btemp]$
\IF {$\pattack > 0$}
\STATE $G^\prime\gets(V, E\setminus E^\prime)$
\FORALL{$(s,t)\in V\times V$}
\STATE $p\gets p(G^\prime, w^\prime, s, t)$  \COMMENT{all-pairs shortest path}
\STATE $d_{G^\prime, s,t}\gets\ell(G^\prime, w, p)$
\STATE $d_{G^\prime, s,t}^\prime\gets\ell(G^\prime, w^\prime, p)$
\STATE $\ppair\gets\Pr_{D\sim\mathcal{D}}[D=(s,t)]$
\STATE $L_d\gets\ppath\cdot\pattack\cdot\ppair\cdot d_{G^\prime, s, t}$
\STATE $L_e\gets\ppath\cdot\pattack\cdot\ppair\cdot\cerr\!\!\left(d_{G^\prime, s, t}, d_{G^\prime, s, t}^\prime\right)$
\STATE $L\gets L+L_d+L_e$
\ENDFOR
\ENDIF
\IF {$\pattack < 1$}
\FORALL{$(s, t)\in V\times V$}
\STATE $\ppair\gets\Pr_{(u,v)\sim\mathcal{D}}[(u,v)=(s,t)]$
\STATE $L_d\gets\ppath\cdot(1-\pattack)\cdot\ppair\cdot d_{G, s, t}$
\STATE $L_e\gets\ppath\cdot(1-\pattack)\cdot\ppair\cdot \cerr\left(d_{G, s, t}, d_{G, s, t}^\prime\right)$
\STATE $L\gets L+L_d+L_e$
\ENDFOR
\ENDIF
\STATE $L\gets L+\lambda\cdot\ppath\cdot\pattack$
\ENDFOR
\STATE \textbf{return} $L$
\end{algorithmic}
\end{algorithm}

\section{Nonlinear Program Constraints}
\label{sec:NP_constraints}
The constraints in the nonlinear program formulation serve the following purposes:
\begin{itemize}
    \item Equation~(\ref{eq:distance_cost}) defines the cost $L_d$ due to distances, weighted by $q_{uv}$---the proportion of traffic between the endpoints $u$ and $v$---and considering the expected value over target paths, hence also taking the sum after weighting by target path attack probabilities $z_{p^*}$.
    \item Equation~(\ref{eq:error_cost_total}) defines the overall cost due to differences between true and published weights, $L_e$, for travel from $u$ to $v$, weighted by the proportion of trips $q_{uv}$ and taking the expected value over all possible target paths, based on target path attack probabilities $z_{p^*}$.
    \item Equation~(\ref{eq:error_cost_single}) defines the individual error cost for a particular path selected by the user ($u$ to $v$) and a particular target path $p^*$ as being $f_+$ times the difference between the published length and the true length if the difference is positive, or $f_-$ times the absolute difference if it is negative.
    \item Equation~(\ref{eq:first_attack_const}) states that the attack vector $\Delta_{p^*}$ makes $p^*$ the shortest path between its terminal nodes.
    \item Equation~(\ref{eq:no_attack}) states that no edges will be cut if there is no target path.
    \item Equation~(\ref{eq:minDelta}) constrains the attack to be the best possible (having the smallest possible attacker cost).
    \item Equation~(\ref{eq:attack_prob}) defines $z_{p^*}$ as the target path attack probability: The probability that $p^*$ is the target path times the probability that the optimal attack for the target $p^*$ is within the attacker's budget.
    \item Equation~(\ref{eq:z0}) defines $z_\emptyset$ as the complement of all target path attack probabilities.
    \item Equation~(\ref{eq:last_attack_const}) ensures all target path attack probabilities are nonnegative.
    \item Equation~(\ref{eq:minpath}) defines $p_{uv,p^*}$ as the shortest path between $u$ and $v$ when the attacker performs an attack with target path $p^*$, according to the published weights.
    \item Equation~(\ref{eq:true_length}) defines $\ell_{uv,p^*}^{\textrm{true}}$ as the true length of $p_{uv,p^*}$ (using the true weights).
    \item Equation~(\ref{eq:obs_length}) defines $\ell_{uv,p^*}^{\textrm{obs}}$ as the observed length of $p_{uv,p^*}$ (using the published weights).
    \item Equation~(\ref{eq:diff_piece}) constrains variables denoting the positive and negative differences between the true and observed weights ($d_{uv,p^*}^{\textrm{pos}}$ and $d_{uv,p^*}^{\textrm{neg}}$, respectively) to be nonnegative.
    \item Equation~(\ref{eq:diff_diff}) constrains the difference between $d_{uv,p^*}^{\textrm{pos}}$ and $d_{uv,p^*}^{\textrm{neg}}$ to be the difference between the observed and true lengths of $p_{uv,p^*}$.
\end{itemize}

\section{Proof of Theorem~\ref{thm:connected}}
\label{sec:connectedProof}
Here we prove that the optimal attack will never result in a disconnected graph.
\begin{proof}[Proof of Theorem~\ref{thm:connected}]
If the optimal solution were to disconnect a connected graph, the resulting graph would have multiple connected components, one of which contains the target path $p^*$. Let $e$ be an edge (removed in the attack) that connects the component that includes $p^*$ to another component. Suppose $e$ is put back into the graph after the attack. Since there is a single edge connecting the two components, this will not affect the shortest path from the source $s$ to the destination $t$ of $p^*$: If a path starting at $s$ were to cross $e$ into the other component, it would have to cross back over $e$ to reach $t$, and thus would not be a shortest path. This contradicts the assumption that the proposed attack was optimal.
\end{proof}

Any attack algorithm that provides a solution to Force Path Cut may be augmented to ensure that the resulting graph is connected: If the attack results in multiple connected components, add the highest-cost edge between two connected components back to the graph until it is connected again.

\section{Proof of Theorem~\ref{thm:hardness}}
\label{sec:proof}
We begin with the graph as described in Section~\ref{subsec:zerosum}. Pseudocode for creating this graph is provided in Algorithm~\ref{alg:construction}.
\begin{algorithm}[tb]
\caption{Construct Zero-Sum Cut Defense problem from Knapsack}
\label{alg:construction}
\textbf{Input}: item values $\{\nu_1,\cdots,\nu_n\}\subset\ints_+$, item weights $\{\eta_1,\cdots,\eta_n\}\subset\ints_+$, value threshold $U$, weight threshold $H$\\
\textbf{Output}: graph $G=(V, E)$, true weights $w$, edge removal costs $c$, budget dist. $\mathcal{B}$, $p^*$ dist. $\mathcal{P}$, $(s,t)$ pair dist. $\mathcal{D}$, error costs $(f_+, f_-)$
\begin{algorithmic}[1] 
\STATE $V\gets\{u_0\}, E\gets\emptyset$
\FOR{$i\gets 1$ to $n$}
\STATE $V\gets V\cup\{u_i, \omega_i\}$
\STATE $E\gets E\cup\{\{u_{i-1}, \omega_i\},\{u_{i-1}, u_i\},\{u_i, \omega_i\}\}$
\STATE $w(\{u_{i-1}, u_i\})\gets 1$
\STATE $c(\{u_{i-1}, u_i\})\gets 1$
\STATE $w(\{u_{i-1}, \omega_i\})\gets 1$
\STATE $c(\{u_{i-1}, \omega_i\})\gets \nu_i$
\STATE $w(\{u_i, \omega_i\})\gets \eta_i $
\STATE $c(\{u_i, \omega_i\})\gets \nu_i$
\ENDFOR
\STATE $\calB\gets\begin{cases}
1 & \textrm{ if }b=U-1\\
0 & \textrm{ otherwise}
\end{cases}$
\STATE $\calP\gets\begin{cases}
1 & \textrm{ if }p^*=(u_0, u_1, \cdots, u_{n-1}, u_{n})\\
0 & \textrm{ otherwise}
\end{cases}$
\STATE $\calD\gets\begin{cases}
1 & \textrm{ if }(s, t)=(u_0, u_n)\\
0 & \textrm{ otherwise}
\end{cases}$
\STATE $f_+\gets 1$
\STATE $f_-\gets \sum_{i=1}^N{\eta_i}$
\RETURN $G=(V,E)$, $w$, $c$, $\calB$, $\calP$, $\calD$, $(f_+, f,-)$
\end{algorithmic}
\end{algorithm}

Given a graph as constructed by Algorithm~\ref{alg:construction}, consider each of the triangles $\{u_{i-1}, \omega_i, u_i\}$ individually. To make $p^*$ shortest, the attacker will have to cut either $\{u_{i-1}, \omega_i\}$ or $\{\omega_i, u_i\}$, thus incurring cost $\nu_i$, any time $\{u_{i-1}, u_i\}$ is not the shortest path from $u_{i-1}$ to $u_i$, i.e., when \begin{equation}
    w^\prime(\{u_{i-1}, u_i\})\geq w^\prime(\{u_{i-1}, \omega_i\} )+w^\prime(\{\omega_i, u_i\}).\label{eq:triangleDistance}
\end{equation}
If (\ref{eq:triangleDistance}) does not hold, then $\{u_{i-1}, u_i\}$ is the shortest path from $u_{i-1}$ to $u_i$, so the attacker will not remove any edges in triangle $\{u_{i-1}, \omega_i, u_i\}$. This observation yields the following lemma.
\begin{lemma}
The attacker will cut an edge in the triangle $\{u_{i-1}, \omega_i, u_i\}$, incurring a cost of $v_i$, if and only if (\ref{eq:triangleDistance}) holds.\label{lem:attackCondition}
\end{lemma}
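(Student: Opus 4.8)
The plan is to reduce the global question (which edges does the minimum-cost attack $\Delta_{p^*}$ cut?) to an independent per-triangle question, exploiting the chain-of-triangles topology of the graph built by Algorithm~\ref{alg:construction}. Concretely, I would show that both the attacker's cost and the requirement ``$p^*$ is the unique shortest $s$--$t$ path'' decompose as a sum, respectively a conjunction, over the triangles, so that the optimal attack is obtained by doing the cheapest feasible thing in each triangle $\{u_{i-1},\omega_i,u_i\}$ separately; a one-line check in a single triangle then yields the stated equivalence with (\ref{eq:triangleDistance}).

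First I would record two structural facts. (a) No attack in $\calX(G,\wVec^\prime,p^*)$ removes an edge of $p^*$: each $\{u_{i-1},u_i\}$ lies on $p^*$, so deleting it would leave $p^*$ with no realization in $G^\prime$; hence every attack is a disjoint union $E^\prime=\bigcup_i E^\prime_i$ of off-spine edges with $E^\prime_i\subseteq\{\{u_{i-1},\omega_i\},\{\omega_i,u_i\}\}$, and $c(E^\prime)=\sum_i |E^\prime_i|\,\nu_i$ since both off-spine edges of triangle $i$ cost $\nu_i$. (b) For every $i$ the only simple $u_{i-1}$--$u_i$ paths in the whole graph are the direct edge $\{u_{i-1},u_i\}$ and the two-hop path $u_{i-1},\omega_i,u_i$, because every $u_j$ is a cut vertex and the three edges of triangle $i$ separate $u_{i-1}$ from $u_i$; consequently, in any subgraph $G^\prime$ obtained by deleting off-spine edges, $d_{G^\prime}(s,t)=\sum_i d_{G^\prime}(u_{i-1},u_i)$ and every shortest $s$--$t$ path is a concatenation of shortest $u_{i-1}$--$u_i$ paths.

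Next I would prove the decomposition: $p^*$ is the unique shortest $s$--$t$ path in $G^\prime$ iff for every $i$ the edge $\{u_{i-1},u_i\}$ is the unique shortest $u_{i-1}$--$u_i$ path in $G^\prime$. ``If'' is immediate from (b) (concatenate the uniquely shortest segments and peel off cut vertices to propagate uniqueness). ``Only if'': if some segment $i$ failed, the two-hop path would survive in $G^\prime$ with length at most $w^\prime(\{u_{i-1},u_i\})$, and substituting it for $\{u_{i-1},u_i\}$ inside $p^*$ would produce a distinct $s$--$t$ path of length at most $\ell(G^\prime,w^\prime,p^*)$, contradicting uniqueness. Given this together with the cost decomposition, the optimal $\Delta_{p^*}$ minimizes $|E^\prime_i|$ in each triangle independently. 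If (\ref{eq:triangleDistance}) fails, then $\{u_{i-1},u_i\}$ is already strictly shorter than the two-hop path (the only competitor, by (b)), so $|E^\prime_i|=0$ is feasible and hence optimal: no edge of the triangle is cut. If (\ref{eq:triangleDistance}) holds, keeping both off-spine edges leaves a competing path of length $\le w^\prime(\{u_{i-1},u_i\})$, forcing $|E^\prime_i|\ge 1$; deleting exactly one off-spine edge destroys the two-hop path and, by (b), leaves $\{u_{i-1},u_i\}$ as the unique $u_{i-1}$--$u_i$ path, so $|E^\prime_i|=1$ is optimal, with cost exactly $\nu_i$. Combining the two cases gives the lemma.

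The only delicate points are the ``only if'' half of the decomposition and the fact that every simple $u_{i-1}$--$u_i$ path stays inside triangle $i$ (so that a single cut genuinely suffices) — this is where the chain topology does the work. I would also flag that ``shortest path'' must be read as \emph{uniquely} shortest; that is precisely what makes the non-strict inequality in (\ref{eq:triangleDistance}), rather than a strict one, the correct threshold for forcing a cut, since a tie inside a triangle still has to be broken.
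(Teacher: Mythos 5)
Your proposal is correct and follows essentially the same route as the paper: the paper's argument for Lemma~\ref{lem:attackCondition} is precisely the per-triangle observation that the attacker must cut one of $\{u_{i-1},\omega_i\}$ or $\{\omega_i,u_i\}$ (cost $\nu_i$) exactly when the spine edge is not strictly shorter than the parallel two-hop path, and you have simply made explicit the structural facts (spine edges cannot be cut, the $u_j$ are cut vertices, so cost and the shortest-path condition decompose triangle by triangle) that the paper leaves implicit. Your remark that the non-strict inequality in (\ref{eq:triangleDistance}) is the right threshold because ties must still be broken is a useful clarification of a point the paper glosses over.
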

In terms of the defender's cost, we show that if the defender forces the attacker to remove an edge from triangle $i$, the minimum cost is $\eta_i$.

\begin{lemma}
The optimal defender cost when a user travels from $u_{i-1}$ to $u_i$ is
\begin{equation}
    L_d+L_e=\begin{cases}
    \eta_i+1 &\textrm{ if (\ref{eq:triangleDistance}) holds}\\
    1 &\textrm{ otherwise}
\end{cases}.
\end{equation}
Furthermore, the optimal solution only perturbs edges $\{u_{i-1},u_i\}$, the edges along $p^*$.\label{lem:defenseCost}
\end{lemma}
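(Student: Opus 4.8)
The plan is to reduce the whole defense instance to the no-attack graph and then exploit the chain-of-triangles structure to optimize one triangle at a time. We may assume $\sum_{i=1}^n\nu_i\geq U$, since otherwise no subset of items attains value $U$ and the Knapsack instance is a trivial ``no''. Under this assumption, raising each $w^\prime(\{u_{i-1},u_i\})$ until (\ref{eq:triangleDistance}) holds for every $i$ forces the attacker's cost to $\sum_i\nu_i>U-1$, so by Lemma~\ref{lem:attackCondition} the attack probability is $0$ and $z_\textrm{min}=0$. Hence every feasible solution of Zero-Sum Cut Defense on this instance has attack probability $0$, i.e.\ $G^\prime=G$; and since the only traffic is $(s,t)=(u_0,u_n)$, the defender's cost equals the cost borne by a single user routed through $G$ according to $w^\prime$.

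Next I would use that $u_1,\dots,u_{n-1}$ are cut vertices, so any $u_0$--$u_n$ path is a concatenation of $n$ segments, the $i$th joining $u_{i-1}$ to $u_i$ inside triangle $\{u_{i-1},\omega_i,u_i\}$ and equal either to the direct edge or to the detour $u_{i-1}\to\omega_i\to u_i$. Both $L_d$ (a sum of segment true lengths) and the discrepancy $\ell^{\textrm{obs}}-\ell^{\textrm{true}}$ (a sum of per-segment discrepancies) are additive over segments, and changing $w^\prime$ inside one triangle does not affect routing in any other; so it is enough to minimize the contribution of a single triangle---which is exactly the cost ``when a user travels from $u_{i-1}$ to $u_i$''---and then sum. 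I would also adopt the pessimistic convention that when the $u_{i-1}$--$u_i$ shortest path is non-unique the user traverses the one of greater true length; this is the one place the reduction genuinely needs care, since otherwise the defender could make (\ref{eq:triangleDistance}) hold with equality while still routing the user over the direct edge, thereby activating the triangle at cost $1$.

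For a single triangle, write $a,b,c$ for the published weights of the direct edge and of the two legs of the detour (true weights $1,1,\eta_i$). If (\ref{eq:triangleDistance}) fails, i.e.\ $a<b+c$, the direct edge is the unique observed-shortest $u_{i-1}$--$u_i$ path, so the segment contributes $1$ to $L_d$ and $\cerr(1,a)\geq0$ to $L_e$, and the minimum, $1$, is attained at $a=1$ with $b,c$ left at their true values. If (\ref{eq:triangleDistance}) holds, the detour is observed-shortest (or tied, hence chosen by the convention), so the segment contributes $1+\eta_i$ to $L_d$ and $\cerr(1+\eta_i,\,b+c)\geq0$ to $L_e$; the latter is $0$ exactly when $b+c=1+\eta_i$, so the minimum, $1+\eta_i$, is attained with $b,c$ at their true values and $a$ raised to any value at least $1+\eta_i$. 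In both cases an optimal choice perturbs only the edge $\{u_{i-1},u_i\}$, which is the lemma.

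The step that needs real argument is the lower bound in the ``(\ref{eq:triangleDistance}) holds'' case---ruling out that tampering with the detour edges could undercut $1+\eta_i$. The key observation is that under (\ref{eq:triangleDistance}) and pessimistic tie-breaking the user's realized $u_{i-1}$--$u_i$ segment has true length exactly $1+\eta_i$ for \emph{every} choice of $b,c$ (they only move the observed length, not the true one), so $L_d$ alone already forces the bound, and neither the kink in the piecewise-linear $\cerr$ nor any cancellation between over- and under-stated segments can help since $L_e\geq0$. The rest---checking that the exhibited minimizers satisfy the constraints of the full Zero-Sum Cut Defense program, which here amounts only to keeping the attack probability at $z_\textrm{min}=0$---is routine.
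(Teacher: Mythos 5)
Your proposal is correct and follows essentially the same route as the paper's proof: restrict to the no-attack case, analyze each triangle separately, case-split on whether (\ref{eq:triangleDistance}) holds, and resolve the tie at equality by worst-case (pessimistic) routing, which is exactly how the paper rules out the defender cheaply ``activating'' a triangle at cost $1$. Your lower-bound argument in the active case (the realized segment's \emph{true} length is pinned at $1+\eta_i$ regardless of $b,c$, and $L_e\geq 0$) is a slightly more streamlined version of the paper's explicit three-way sign analysis of the perturbations $\delta$ and $\epsilon$, but the substance is the same.
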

\begin{proof}
In the context of a Zero-Sum Cut Defense problem constructed by Algorithm~\ref{alg:construction}, it is always possible to reduce the probability of attack to zero. Thus, we will focus on the case where no attack occurs. Let $\delta$ be difference between the true and published distance for the path $(u_{i-1}, u_i)$ (i.e., the published distance is $1+\delta$) and $\epsilon$ be the same for path $(u_{i-1}, \omega_i, u_i)$ (published distance $1+\eta_i+\epsilon$). Recall that the marginal increase in cost for publishing a longer distance than the user experiences is 1, while the marginal cost increase for publishing shorter distances is $H^\prime$. If equality holds in (\ref{eq:triangleDistance}), either path may be taken, so we consider all possible costs. Note that, in this case, $\delta=\eta_i+\epsilon$.
\begin{itemize}
    \item If $\delta\geq 0$ and $\epsilon\geq 0$, then the cost of taking $(u_{i-1}, u_i)$ is $1+\delta=1+\eta_i+\epsilon\geq1+\eta_i$. The cost of taking $(u_{i-1}, \omega_i, u_i)$ is $1+\eta_i+\epsilon\geq1+\eta_i$.
    \item If $\delta \geq 0$ and $\epsilon < 0$, then the cost of taking $(u_{i-1}, u_i)$ is $1+\delta=1+\eta_i+\epsilon<1+\eta_i$, but the cost of taking 
    $(u_{i-1}, \omega_i, u_i)$ is $1+\eta_i+H^\prime|\epsilon|>1+\eta_i$.
    \item If $\delta < 0$ and $\epsilon < 0$, then the cost of taking $(u_{i-1}, u_i)$ is $1+H^\prime|\delta|$, which may be smaller than $1+\eta_i$, but the cost of taking 
    $(u_{i-1}, \omega_i, u_i)$ is $1+\eta_i+H^\prime|\epsilon|>1+\eta_i$.
\end{itemize}
The first case is the only one where the cost is never strictly above $1+\eta_i$, so this cases optimizes the worst-case cost. It achieves the lower bound only when $\epsilon=0$ and $\delta=\eta_i$.

If equality does not hold in (\ref{eq:triangleDistance}), then the path $(u_{i-1}, \omega_i, u_i)$ is taken from $u_{i-1}$ to $u_i$. The cost is minimized when no edge weight along this path is perturbed, and the cost is $1+\eta_i$.

If (\ref{eq:triangleDistance}) does not hold, the user will take path $(u_{i-1}, u_i)$. As in the previous case, cost is minimized when no perturbation takes place, yielding a cost of 1.
\end{proof}

We now introduce the full reduction from Knapsack to Zero-Sum Cut Defense. We start by creating the Zero-Sum Cut Defense problem using Algorithm~\ref{alg:construction}. We run Zero-Sum Cut Defense on this graph and check each of its triangles $\{u_{i-1}, \omega_i, u_i\}$. If the published weight of edge $\{u_{i-1}, \omega_i, u_i\}$ is greater than its true weight, we add the item to the knapsack. If the total weight of the items is no more than the prescribed threshold, we return true; the answer to the decision question is yes. Otherwise, we return false. Pseudocode for the reduction is provided in Algorithm~\ref{alg:reduction}.
\begin{algorithm}[tb]
\caption{Reduction from Knapsack to  Zero-Sum Cut Defense}
\label{alg:reduction}
\textbf{Input}: item values $\{\nu_1,\cdots,\nu_n\}\subset\ints_+$, item weights $\{\eta_1,\cdots,\eta_n\}\subset\ints_+$, value threshold $U\in\ints_+$, weight threshold $H\in\ints_+$\\
\textbf{Output}: Boolean value indicating answer to Knapsack
\begin{algorithmic}[1] 
\STATE $G$,$w$, $c$, $\mathcal{B}$, $\mathcal{P}$, $\mathcal{D}$, $(f_+, f_-)\gets$ Algorithm~\ref{alg:construction}($\{\nu_1,\cdots,\nu_n\}$, $\{\eta_1,\cdots,\eta_n\}$, $U$, $H$)
\STATE $w^\prime\gets$ Zero-Sum Cut Defense($G$,$w$, $c$, $\mathcal{B}$, $\mathcal{P}$, $\mathcal{D}$, $(f_+, f_-)$)
\STATE $\eta_{\textrm{total}}\gets0$
\FOR{$i\gets 1$ to $n$}
\IF{$w^\prime(\{(u_{i-1}, u_i)\})>w(\{(u_{i-1}, u_i)\})$}
\STATE $\eta_{\textrm{total}}\gets \eta_{\textrm{total}}+\eta_i$
\ENDIF
\ENDFOR
\IF{$\eta_{\textrm{total}}\leq H$}
\RETURN \TRUE
\ELSE
\RETURN \FALSE
\ENDIF
\end{algorithmic}
\end{algorithm}

This reduction solves the Knapsack problem, as we formally prove now.
\begin{lemma}
Algorithm~\ref{alg:reduction} returns true if and only if there is a subset of items whose weights sum to no more than $H$ and value sums to at least $U$.\label{lem:reduction}
\end{lemma}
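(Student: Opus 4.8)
The plan is to prove Lemma~\ref{lem:reduction} by establishing a bijective correspondence between Knapsack solutions and feasible defenses, tracking costs on both sides. First I would invoke the fact (established in the proof of Lemma~\ref{lem:defenseCost}) that in any instance produced by Algorithm~\ref{alg:construction}, the attack probability can always be driven to zero, and that the Zero-Sum Cut Defense objective then decomposes as a sum over the $n$ triangles of the per-triangle cost $L_d+L_e$ given in Lemma~\ref{lem:defenseCost}. So the optimal defender solution is determined, triangle by triangle, by a binary choice: either inequality~(\ref{eq:triangleDistance}) holds for triangle $i$ (cost $\eta_i+1$, and by Lemma~\ref{lem:attackCondition} the attacker must spend $\nu_i$ there) or it does not (cost $1$, attacker spends nothing in that triangle). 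Let $S$ be the set of triangles where (\ref{eq:triangleDistance}) holds; this is exactly the set Algorithm~\ref{alg:reduction} reads off when it checks whether $w^\prime(\{u_{i-1},u_i\}) > w(\{u_{i-1},u_i\})$.

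Next I would translate the zero-attack-probability requirement into a Knapsack value constraint. The attacker's budget is $U-1$; an attack succeeds (and the defense fails) precisely when the total cost the attacker must incur to cut all competing paths is at most $U-1$. By Lemma~\ref{lem:attackCondition}, that total cost is $\sum_{i\in S}\nu_i$. Hence forcing the attack probability to its minimum of zero is equivalent to requiring $\sum_{i\in S}\nu_i \geq U$ — that is, $S$ corresponds to a subset of items of total value at least $U$. On the objective side, the total optimal defender cost is $\sum_{i\in S}(\eta_i+1) + \sum_{i\notin S} 1 = n + \sum_{i\in S}\eta_i$, so the optimal Zero-Sum Cut Defense solution minimizes $\sum_{i\in S}\eta_i$ over all $S$ with $\sum_{i\in S}\nu_i\geq U$ — exactly the (minimization form of the) Knapsack problem.

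I would then close the argument in both directions. For the ``if'' direction: if there is an item subset with value $\geq U$ and weight $\leq H$, then the defense solution with (\ref{eq:triangleDistance}) holding exactly on that subset is feasible (attack probability zero) and has cost $n + (\text{weight}) \leq n+H$; since the optimizer returns a cost-minimal such solution, the subset $S$ it induces satisfies $\sum_{i\in S}\eta_i \leq H$, so Algorithm~\ref{alg:reduction} returns \TRUE. For the ``only if'' direction: if Algorithm~\ref{alg:reduction} returns \TRUE, the set $S$ read off from $w^\prime$ has $\sum_{i\in S}\eta_i\leq H$; and because $w^\prime$ achieves attack probability zero (the optimizer always can, and it must, since that is the hard constraint of Zero-Sum Cut Defense), we have $\sum_{i\in S}\nu_i\geq U$, so $S$ witnesses a yes-answer to Knapsack. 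Combining gives the stated equivalence, and since Algorithm~\ref{alg:construction} and the post-processing in Algorithm~\ref{alg:reduction} run in polynomial time, Theorem~\ref{thm:hardness} follows.

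The main obstacle I anticipate is rigorously justifying that we may restrict attention to solutions that perturb \emph{only} the edges $\{u_{i-1},u_i\}$ along $p^*$ and that drive the attack probability exactly to zero — i.e., that no cleverer perturbation of the $\omega_i$-incident edges, or acceptance of a small positive attack probability, could beat the Knapsack optimum. The first half of this is precisely the ``furthermore'' clause of Lemma~\ref{lem:defenseCost}, so the real work is checking that its triangle-local analysis composes correctly across triangles (the triangles are edge-disjoint and the only shared vertices $u_i$ lie on $p^*$, so path lengths between $u_{i-1}$ and $u_i$ are unaffected by perturbations elsewhere), and confirming that $\calB$ being a point mass at $U-1$ forces the attack probability to be either $0$ or $1$, eliminating any intermediate tradeoff.
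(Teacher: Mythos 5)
Your proposal is correct and follows essentially the same route as the paper's proof: it reduces everything to Lemmas~\ref{lem:attackCondition} and~\ref{lem:defenseCost}, identifies the set of triangles where (\ref{eq:triangleDistance}) holds with a candidate Knapsack subset, uses the point-mass budget $U-1$ to equate zero attack probability with total removal cost at least $U$, and argues the two directions by constructing a defense from a Knapsack solution and reading a Knapsack solution off the returned $w^\prime$. Your explicit reformulation of the defender's objective as $n+\sum_{i\in S}\eta_i$ (i.e., min-weight subject to value at least $U$) and your flagging of the cross-triangle composition issue are, if anything, slightly more careful than the paper's own writeup, but the underlying argument is the same.
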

\begin{proof}
If there is a subset $X\subset\{1,\cdots,n\}$ where $\sum_{i\in X}{\nu_i}\geq U$ and $\sum_{i\in X}{\eta_i}\leq H$, then one solution to Zero-Sum Cut Defense is to increase the edge weight on $\{u_{i-1},u_i\}$ by at least $\eta_i$ for all $i\in X$. This will result in a case where the attacker must spend at least $U$ to be effective: By Lemma~\ref{lem:attackCondition}, the attacker must remove an edge with removal cost $\nu_i$ for all $i\in X$, yielding a required attack budget of at least $U$ by our assumption, thus exceeding the budget constructed in Algorithm~\ref{alg:construction}. By Lemma~\ref{lem:defenseCost}, the cost to the defender of each perturbation will increase by $\eta_i$, which we also assume will sum to at most $H$. Since such a defense exists, the minimum-cost defense must have cost no more than $H$. Thus, if there is a solution to Knapsack, Algorithm~\ref{alg:reduction} will return true.

Now suppose that there is no such set, but Algorithm~\ref{alg:reduction} returns true. For the attack probability to be zero, the optimal attack must cost at least $U$. Let $E^\prime$ be the edges cut in an optimal attack. All edges will be of the form $\{u_{i-1},\omega_i\}$ or $\{\omega_i,u_i\}$, since all edges $\{u_{i-1}, u_i\}$ are part of $p^*$. There will never be an optimal attack that removes both $\{u_{i-1},\omega_i\}$ and $\{\omega_i,u_i\}$: this would increase the cost of the attack and not remove any competing paths. Thus, let $Y\subset\{1,\cdots,n\}$ be the indices of the triangles that are cut, i.e., $$i\in Y\iff\{u_{i-1},\omega_i\}\in E^\prime\textrm{ or }\{\omega_i,u_i\}\in E^\prime.$$
Since the defense minimizes the attack probability, we have $\sum_{i\in Y}{\nu_i}\geq U$. By Lemma~\ref{lem:attackCondition}, we know that these edges will be cut only if (\ref{eq:triangleDistance}) holds, i.e., it holds for all $i\in Y$ and not for any $i\in\{1,\cdots,n\}\setminus Y$. From Lemma~\ref{lem:defenseCost}, we know that the cost is optimized in this case by perturbing $w(\{u_{i-1},u_i\})$, increasing it by $\eta_i$. Thus, $\eta_{\textrm{total}}=\sum_{i\in Y}{\eta_i}$. Algorithm~\ref{alg:reduction} returns true only if $\eta_{\textrm{total}}\leq H$. However, this contradicts our assumption that there is no solution to Knapsack: $Y$ is a set of items with weight no more than $H$ and value at least $U$. Thus, when there is no solution to Knapsack, Algorithm~\ref{alg:reduction} will return false.
\end{proof}

Using these intermediate results, we now prove the original theorem.
\begin{proof}[Proof of Theorem~\ref{thm:hardness}]
Lemma~\ref{lem:reduction} shows that Algorithm~\ref{alg:reduction} is a reduction from Knapsack to Zero-Sum Cut Defense. Constructing the graph includes $n$ iterations of a loop, each of which performs a sub-linear amount of work. Thus, the loop completes in polynomial time. Creating the distributions requires at most linear time, and computing the marginal cost of error takes linear time. Thus, Algorithm~\ref{alg:reduction} is a polynomial-time reduction from Knapsack to Zero-Sum Cut Defense. Since Knapsack is an NP-hard problem, this implies that Zero-Sum Cut Defense is NP-hard as well.
\end{proof}
\section{Dataset Features}
\label{sec:datasets}
We use the following datasets in our experiments:
\begin{itemize}
    \item ER: Erd\H{o}s--R\'{e}nyi random graphs where each pair of nodes shares an edge with probability $0.048$.
    \item BA: Barab\'{a}si--Albert graphs where incoming nodes attach with 6 edges.
    \item WS: Watts--Strogatz graphs with average degree 12 and rewiring probability $0.05$.
    \item SBM: Stochastic blockmodel (SBM) graphs where nodes are separated into a 200-node cluster with internal connection probability $0.06$ and a 50-node cluster with connection probability $0.2$, and two nodes in different clusters share an edge with probability $0.005$.
    \item USAIR: Nodes are 500 airports in the United States, edge weights are the total numbers of seats on all flights between them. Available at \url{https://toreopsahl.com/datasets/\#usairports}.
    \item UKMET: Nodes are metro stops and weights are the average travel time between the stops. We take the largest strongly connected component and make it undirected by averaging the weights in each direction. Available at \url{https://datadryad.org/stash/dataset/doi:10.5061/dryad.pc8m3}.
    \item HT: Nodes are attendees at the ACM Hypertext 2009 conference, and edges are the number of interactions between people over the course of the two days. Available at \url{http://www.sociopatterns.org/datasets/hypertext-2009-dynamic-contact-network/}.
    \item AS: Nodes are routers and edges are connections between them (dated 11 Dec. 1999). Available at \url{http://snap.stanford.edu/data/as-733.html}.
\end{itemize}
Statistics of the datasets are provided in Table~\ref{table:syn_net_prop}. Code used in the experiments, including the code used for dataset preparation, is available at \url{https://github.com/bamille1/PATHDEFENSE}.

\begin{table*}[!ht]
\renewcommand{\arraystretch}{1.0}
\centering
\begin{tabular}{|l||c|c|c|c|c|c|c|} 
\hline
\!Network & $|V|$ & $|E|$ & $\langle k\rangle$ & $\sigma_k$ & $\kappa$ & $\tau$ & $\triangle$ \\
\hline 
\!ER	&	 	$250$ &			$1498.3$ &		$11.986 $ & 	$3.352$ 	& $0.048$ 					& $0.047$ 	  &	$280.6$ \\ 
    &             &         $\pm32.885$ &   $\pm0.263$ &    $\pm0.144$	&  $\pm0.003$               & $\pm0.002$  & $\pm18.645$         \\
\hline
\!BA	&	 	$250$ & 		$1464.0 $ &		$11.712 $ & 	$9.345$ 	& $0.122$ 					& $0.099$ 	  &	$878.1 $ \\ 
    &             &         $\pm0.0$ &      $\pm0.0$ &      $\pm0.272$	&  $\pm0.006$               & $\pm0.003$  & $\pm36.585$                    \\
\hline
\!WS	&	 	$250$ & 		$1500.0 $ &		$12.0 $ & 		$0.774$ 	& $0.586$ 					& $0.582$ 	  &	$3214.4 $ \\ 
    &             &         $\pm0.0$ &      $\pm0.0$ &      $\pm0.063$	&  $\pm0.013$               & $\pm0.013$  & $\pm71.256$              \\
\hline
\!SBM	&	 	$250$ & 		$1479.2$ &  	$11.834 $ & 	$3.264$ 	& $0.079$ 					& $0.073$ 	  &	$424.5 $ \\ 
    &             &         $\pm36.13$ &    $\pm0.289$ &    $\pm0.167$	&  $\pm0.004$               & $\pm0.005$  & $\pm47.124$                  \\
\hline
\hline
\!AS	&	 	$1477$ 	&		$3142$ &				$4.254 $ & 			$15.814$ & 	  $0.242$ & 		$0.038$ &		$2530$  \\ 
\hline
\!HT	&	 	$113$ 	&		$2196 $ &				$38.867 $ & 			$18.350$ & 	  $0.534$ & 		$0.495$ &		$16867$  \\ 
\hline
\!USAIR	&	$500$ 	&		$2980 	$ &				$11.92 $ & 				$22.338$ & 	  $0.617$ & 		$0.351$ &		$41583$  \\ 
\hline
\!UKMET\!\! & $298$ & $349$ & $2.342$ & $1.031$ & $0.042$ & $0.086$ & $18$  \\
\hline

\end{tabular}
\caption{Properties of the synthetic and real networks used in our experiments. For each random graph model, we generate 10 networks. Note that the number of edges across the different synthetic networks is $\approx1500$. The table shows the average degree ($\langle k\rangle$), standard deviation of the degree ($\sigma_k$), average clustering coefficient ($\kappa$), transitivity ($\tau$), and number of triangles ($\triangle$). Each graph has a single connected component. The $\pm$ values show the standard deviation across 10 runs of each random graph model.}
\label{table:syn_net_prop}
\end{table*}

\section{Extended Experimental Results}
\label{sec:moreResults}
Here we provide additional experimental results. Results where all the target paths have the same endpoints are shown in Fig.~\ref{fig:sameResultsSynth} and Fig.~\ref{fig:sameResultsReal} for synthetic and real graphs, respectively. Results where each target's endpoints are chosen independently of one another are shown in Fig.~\ref{fig:differentResultsSynth} (synthetic) and Fig.~\ref{fig:differentResultsReal} (real).

\begin{figure*}
    \centering
    \includegraphics[width=\textwidth]{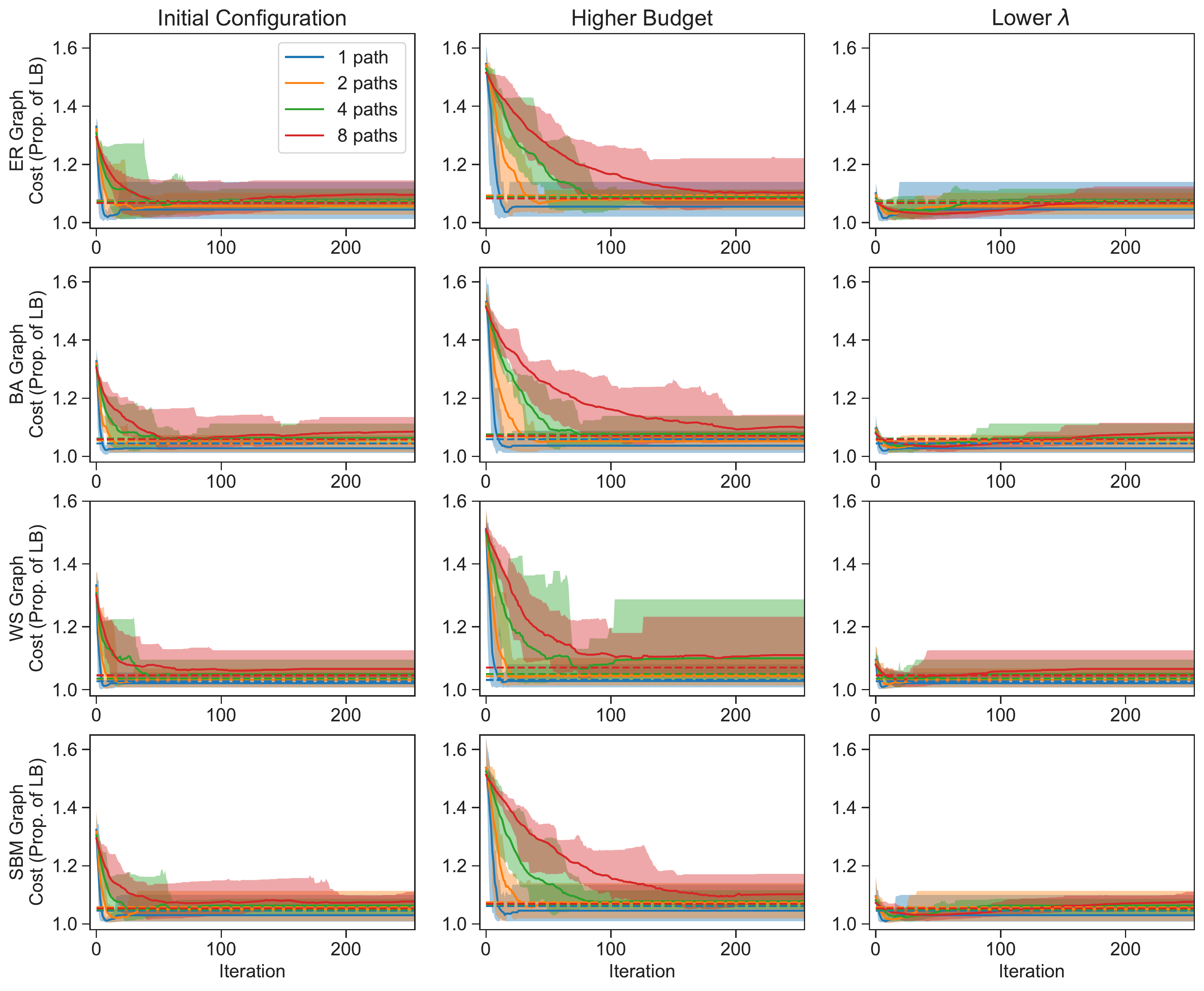}
    \caption{Cost of \texttt{PATHDEFENSE} for all synthetic datasets under various conditions. Each row plots the results for a different dataset, and results are shown for the original budget and $\lambda$ (left column), when the attacker budget is doubled (center column), and when the cost of attacker success is reduced by five times (right column). All target paths use the same terminal nodes. Plots include the average cost (solid line) and the cost range across trials (shaded area), as well as the average zero-sum result (dash line). All cases follow a similar trajectory: an initial decrease in cost followed by a mild increase.}
    \label{fig:sameResultsSynth}
\end{figure*}
\begin{figure*}
    \centering
    \includegraphics[width=\textwidth]{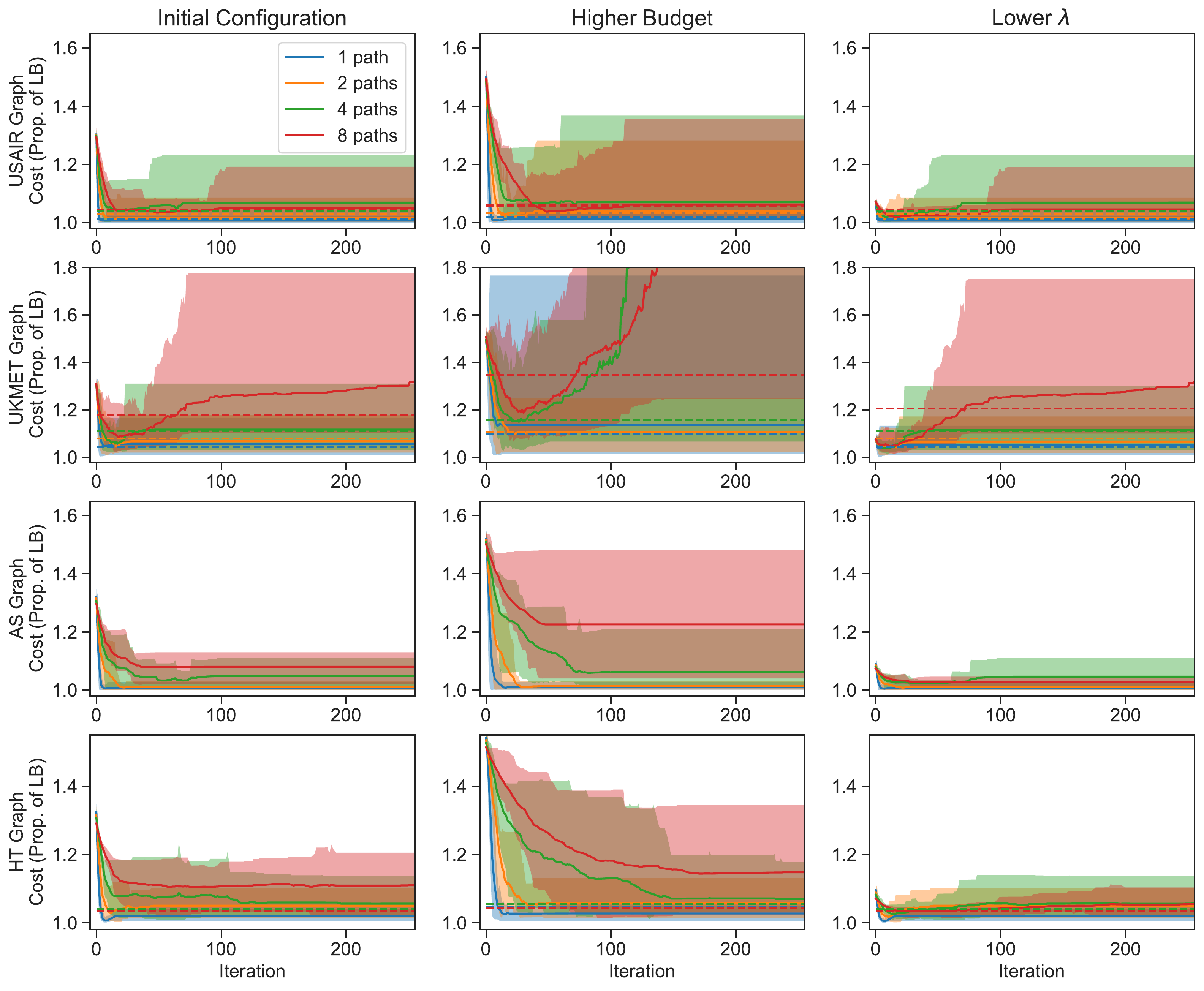}
    \caption{Cost of \texttt{PATHDEFENSE} for all real datasets under various conditions. Each row plots the results for a different dataset, and results are shown for the original budget and $\lambda$ (left column), when the attacker budget is doubled (center column), and when the cost of attacker success is reduced by five times (right column). All target paths use the same terminal nodes. Plots include the average cost (solid line) and the cost range across trials (shaded area), as well as the average zero-sum result (dash line), with the exception of the AS graph, where the zero-sum procedure did not complete in 24 hours and are omitted. Running PATHDEFENSE on both transportation graphs yield a decrease in cost followed by an increase, while the cost with the computer and social networks decrease and become stable when the cost of adversary success $\lambda$ is high. When $\lambda$ is lowered, PATHDEFENSE yields a cost increase after about 100 iterations, with the defender eventually selecting the result of an early iteration.}
    \label{fig:sameResultsReal}
\end{figure*}
\begin{figure*}
    \centering
    \includegraphics[width=\textwidth]{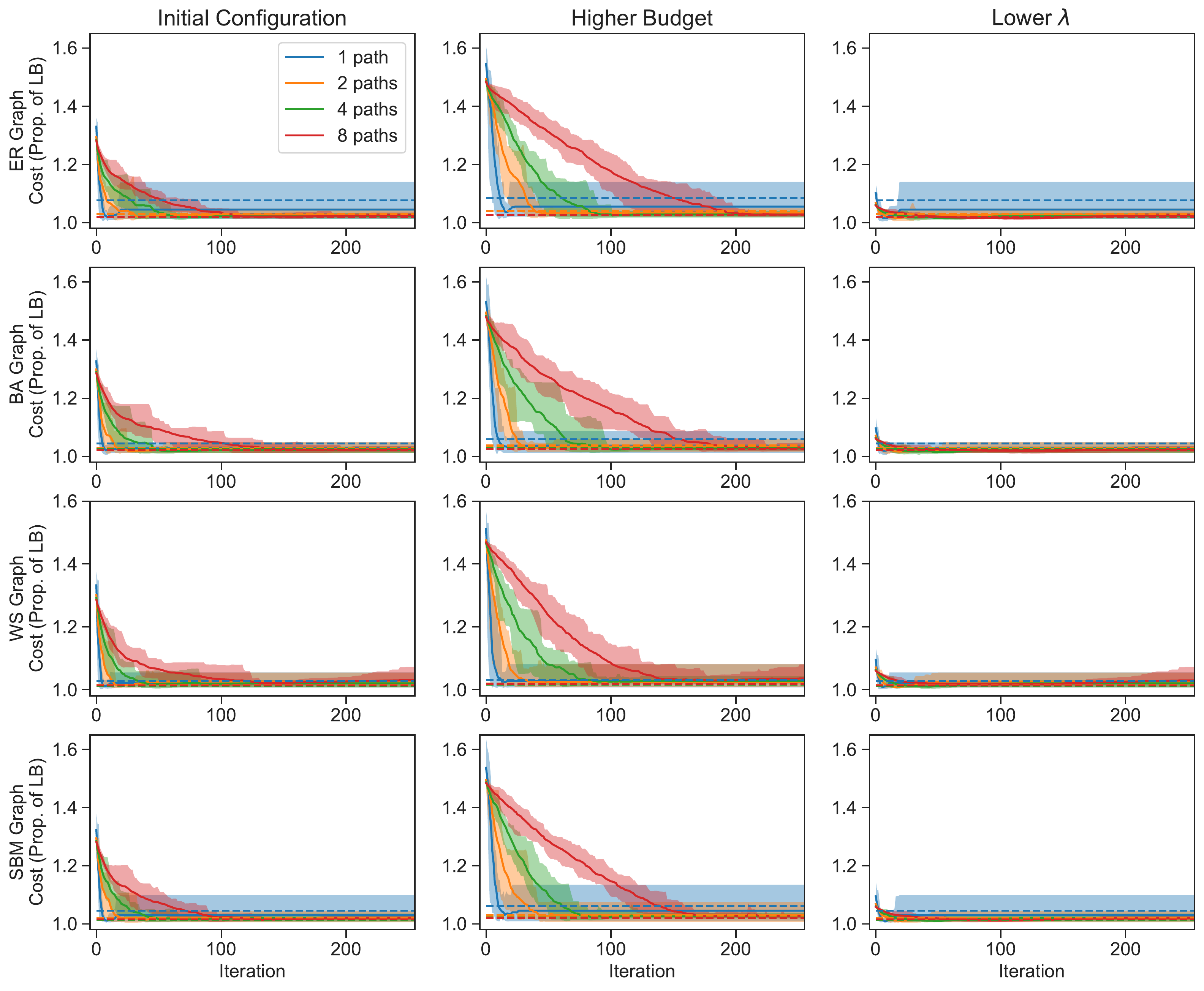}
    \caption{Cost of \texttt{PATHDEFENSE} for all synthetic datasets under various conditions. Each row plots the results for a different dataset, and results are shown for the original budget and $\lambda$ (left column), when the attacker budget is doubled (center column), and when the cost of attacker success is reduced by five times (right column). All target paths use different terminal nodes. Plots include the average cost (solid line) and the cost range across trials (shaded area), as well as the average zero-sum result (dash line). The variation in cost is lower than in the case where all target paths have the same endpoints, likely due to less interference between perturbations for different targets (i.e., a perturbation making one target more difficult to attack is less likely to make another target easier to attack if the two paths do not share start and end points).}
    \label{fig:differentResultsSynth}
\end{figure*}
\begin{figure*}
    \centering
    \includegraphics[width=\textwidth]{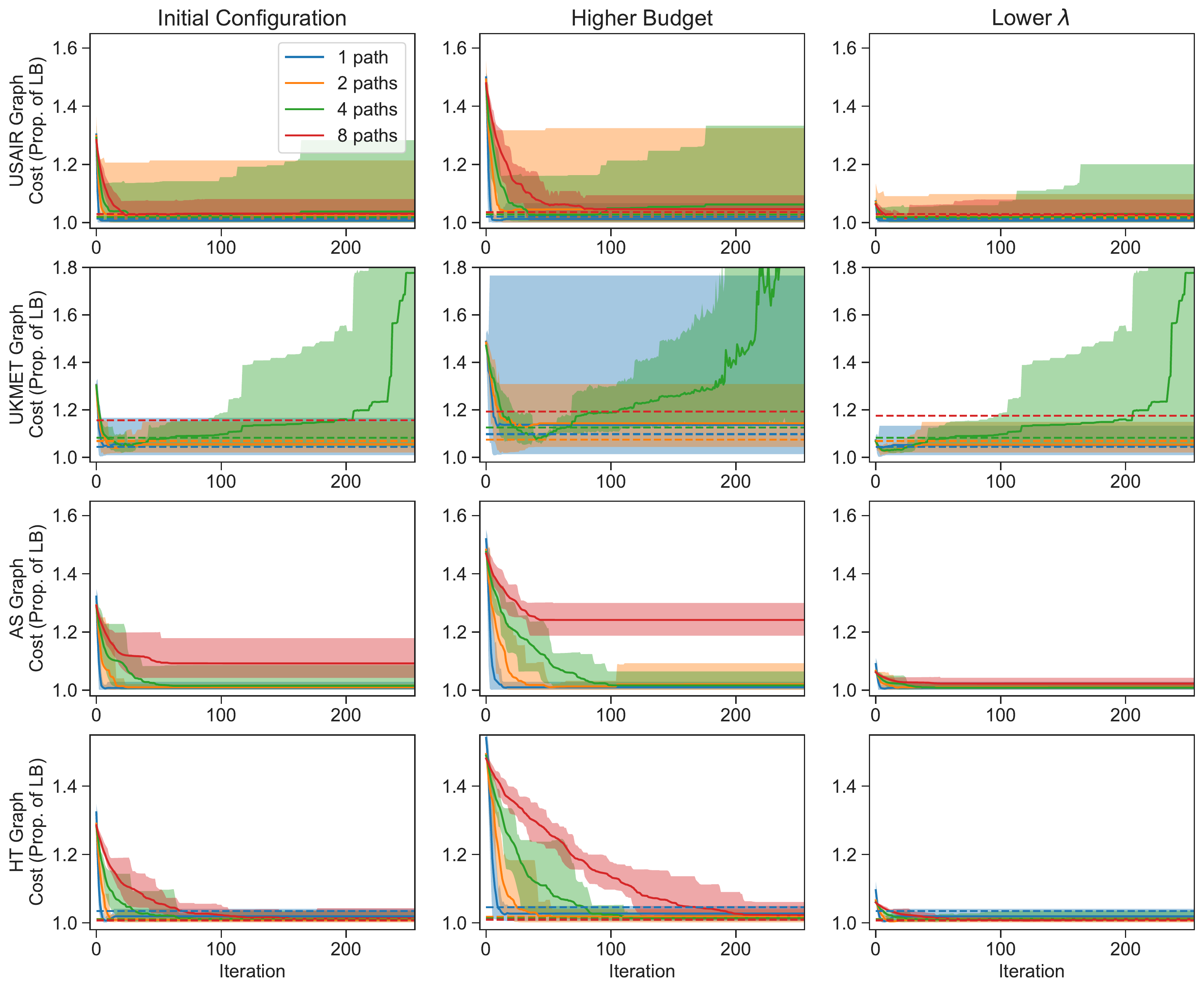}
    \caption{Cost of \texttt{PATHDEFENSE} for all real datasets under various conditions. Each row plots the results for a different dataset, and results are shown for the original budget and $\lambda$ (left column), when the attacker budget is doubled (center column), and when the cost of attacker success is reduced by five times (right column). All target paths use different terminal nodes. Plots include the average cost (solid line) and the cost range across trials (shaded area), as well as the average zero-sum result (dash line), with the exception of the AS graph, where the zero-sum procedure did not complete in 24 hours and are omitted. As in synthetic graphs, the cost variation is lower than in the case where targets share endpoints, though we see some similar qualitative phenomena (large increase in cost for USAIR and UKMET).}
    \label{fig:differentResultsReal}
\end{figure*}

\section{Baseline Comparison}
\label{sec:baseline}
As a baseline, we consider adding large weights only to the possible target paths. We begin by computing the sum of all weights in the graph, $w_{\textrm{all}}=\sum_{e\in E}{w(e)}$. For each possible target path $p^*$, we consider all edges in $E_{p^*}$. We change the weight of edge $e$ to $w^\prime(e)=\max_{p^*|e\in{E_{p^*}}}\frac{w_{\textrm{all}}}{|E_{p^*}|}$. This is guaranteed to make each $p^*$ longer than any existing path in the graph. We refer to this baseline method as \texttt{BigWeight}.

We consider three possible attacks the adversary may use: (1) the \texttt{PATHATTACK} algorithm used in the experiments in Section~\ref{sec:experiments}, (2) the greedy baseline \texttt{GreedyCost} used as a baseline for \texttt{PATHATTACK} (greedily removing the lowest-cost edge of the shortest path until $p^*$ is shortest)~\cite{Miller2021}, and (3) the optimal solution, found using a mixed-integer program solver (thus not guaranteed to be solved in polynomial time). 

Results providing a comparison to the \texttt{BigWeight} baseline are provided in Table~\ref{tab:comparison}. Regardless of the attack method being used by the adversary, the cost of defense is quite similar. Looking deeper into the results, we see that there tends to be a small increase in $L_d$ as more sophisticated methods are used (i.e., smallest for \texttt{GreedyCost}, largest for the optimal attack), though these increases are small enough that they are not statistically significant when aggregating across trials. In all cases, \texttt{PATHDEFENSE} outperforms the \texttt{BigWeight} baseline. The difference is more prominent in the real datasets than the synthetic ones, and is particularly large in the AS graph. These graphs have more edges that cannot avoid being traversed, so using \texttt{PATHDEFENSE} to identify specific edges that minimize the increased cost has a greater benefit. 
 Results for cases where the terminal nodes of the target paths differ are available in Table~\ref{tab:comparison_different}.
\begin{table*}[t]
    \centering
    \caption{Comparison of defender costs with \texttt{PATHDEFENSE} and the \texttt{BigWeight} baseline when all target paths share terminal nodes. Results are shown for all datasets with 1, 2, 4, and 8 target paths, and consider cases where the attacker uses the \texttt{GreedyCost} heuristic, the \texttt{PATHATTACK} approximation algorithm, or uses a mixed-integer program solver to compute the attack. Each cost is reported in terms of the average and standard error over 10 trials. There is typically not a substantial change in defender cost between the various attack methods, as the defenses typically drive the attack cost to zero and have a relatively small impact on distances traveled. Regardless of the attack used, \texttt{PATHDEFENSE} yields a lower cost than the baseline method in all cases.}
    \label{tab:comparison}
    {\small\begin{tabular}{cccccccc}
    \toprule
    & &\multicolumn{2}{c}{\texttt{GreedyCost}}&\multicolumn{2}{c}{\texttt{PATHATTACK}}&\multicolumn{2}{c}{optimal attack}\\
    graph & $|P_t|$ & \texttt{PATHDEFENSE} &\texttt{BigWeight} & \texttt{PATHDEFENSE} &\texttt{BigWeight} & \texttt{PATHDEFENSE} &\texttt{BigWeight} \\
    \midrule
ER&1
&$19.51\pm0.16$
&$22.80\pm0.33$
&$19.60\pm0.16$
&$22.80\pm0.33$
&$19.62\pm0.14$
&$22.80\pm0.33$
\\
ER&2
&$20.24\pm0.21$
&$23.61\pm0.29$
&$20.54\pm0.23$
&$23.61\pm0.29$
&$20.40\pm0.22$
&$23.61\pm0.29$
\\
ER&4
&$21.20\pm0.19$
&$24.01\pm0.20$
&$21.36\pm0.21$
&$24.01\pm0.20$
&$21.30\pm0.19$
&$24.01\pm0.20$
\\
ER&8
&$22.23\pm0.17$
&$24.41\pm0.16$
&$22.56\pm0.31$
&$24.41\pm0.16$
&$22.66\pm0.30$
&$24.41\pm0.16$
\\
BA&1
&$18.37\pm0.39$
&$20.53\pm0.38$
&$18.47\pm0.43$
&$20.53\pm0.38$
&$18.50\pm0.42$
&$20.53\pm0.38$
\\
BA&2
&$18.53\pm0.35$
&$21.03\pm0.29$
&$18.85\pm0.37$
&$21.03\pm0.29$
&$18.77\pm0.36$
&$21.03\pm0.29$
\\
BA&4
&$19.10\pm0.31$
&$21.57\pm0.27$
&$19.39\pm0.28$
&$21.57\pm0.27$
&$19.48\pm0.26$
&$21.57\pm0.27$
\\
BA&8
&$19.71\pm0.28$
&$22.15\pm0.19$
&$19.94\pm0.26$
&$22.15\pm0.19$
&$19.94\pm0.26$
&$22.15\pm0.19$
\\
WS&1
&$26.53\pm0.43$
&$30.38\pm0.90$
&$26.64\pm0.43$
&$30.38\pm0.90$
&$26.67\pm0.43$
&$30.38\pm0.90$
\\
WS&2
&$26.59\pm0.42$
&$31.06\pm0.92$
&$26.76\pm0.42$
&$31.06\pm0.92$
&$26.92\pm0.42$
&$31.06\pm0.92$
\\
WS&4
&$27.20\pm0.48$
&$32.02\pm0.85$
&$27.46\pm0.48$
&$32.02\pm0.85$
&$27.50\pm0.47$
&$32.02\pm0.85$
\\
WS&8
&$27.68\pm0.56$
&$32.80\pm0.87$
&$27.89\pm0.51$
&$32.80\pm0.87$
&$28.03\pm0.54$
&$32.80\pm0.87$
\\
SBM&1
&$20.44\pm0.44$
&$23.06\pm0.33$
&$20.48\pm0.44$
&$23.06\pm0.33$
&$20.46\pm0.43$
&$23.06\pm0.33$
\\
SBM&2
&$20.66\pm0.41$
&$24.10\pm0.22$
&$20.76\pm0.41$
&$24.10\pm0.22$
&$20.74\pm0.43$
&$24.10\pm0.22$
\\
SBM&4
&$21.57\pm0.28$
&$24.50\pm0.25$
&$21.79\pm0.23$
&$24.50\pm0.25$
&$21.90\pm0.23$
&$24.50\pm0.25$
\\
SBM&8
&$22.50\pm0.24$
&$24.98\pm0.27$
&$22.98\pm0.23$
&$24.98\pm0.27$
&$23.00\pm0.23$
&$24.98\pm0.27$
\\
USAIR&1
&$11.98\pm0.33$
&$20.57\pm5.46$
&$11.97\pm0.32$
&$20.57\pm5.46$
&$11.97\pm0.32$
&$20.57\pm5.46$
\\
USAIR&2
&$11.64\pm0.35$
&$20.63\pm5.92$
&$11.67\pm0.38$
&$20.63\pm5.92$
&$11.65\pm0.36$
&$20.63\pm5.92$
\\
USAIR&4
&$11.33\pm0.44$
&$18.59\pm4.63$
&$11.45\pm0.48$
&$18.59\pm4.63$
&$11.40\pm0.50$
&$20.90\pm6.41$
\\
USAIR&8
&$11.22\pm0.43$
&$19.73\pm5.68$
&$11.20\pm0.44$
&$19.73\pm5.68$
&$11.31\pm0.43$
&$30.10\pm12.50$
\\
UKMET&1
&$11.28\pm0.60$
&$15.29\pm1.33$
&$11.49\pm0.62$
&$15.30\pm1.32$
&$11.54\pm0.61$
&$15.43\pm1.34$
\\
UKMET&2
&$11.27\pm0.60$
&$16.34\pm1.26$
&$11.42\pm0.61$
&$16.23\pm1.24$
&$11.54\pm0.61$
&$16.28\pm1.24$
\\
UKMET&4
&$11.52\pm0.61$
&$18.11\pm1.17$
&$11.53\pm0.61$
&$18.05\pm1.13$
&$11.84\pm0.62$
&$18.20\pm1.15$
\\
UKMET&8
&$11.64\pm0.60$
&$19.33\pm1.06$
&$11.96\pm0.72$
&$19.37\pm1.03$
&$12.08\pm0.64$
&$19.64\pm1.06$
\\
AS&1
&$25.60\pm0.65$
&$66.32\pm18.85$
&$25.62\pm0.66$
&$66.32\pm18.85$
&$25.63\pm0.67$
&$66.32\pm18.85$
\\
AS&2
&$25.39\pm0.67$
&$98.57\pm18.81$
&$25.44\pm0.68$
&$98.57\pm18.81$
&$25.42\pm0.66$
&$98.57\pm18.81$
\\
AS&4
&$25.63\pm0.57$
&$101.99\pm17.34$
&$25.52\pm0.57$
&$101.99\pm17.34$
&$25.60\pm0.59$
&$101.99\pm17.34$
\\
AS&8
&$26.71\pm0.72$
&$102.56\pm15.09$
&$26.90\pm0.50$
&$102.57\pm15.09$
&$26.77\pm0.52$
&$102.58\pm15.09$
\\
HT&1
&$21.07\pm0.36$
&$23.13\pm0.49$
&$21.03\pm0.35$
&$23.13\pm0.49$
&$21.07\pm0.36$
&$23.13\pm0.49$
\\
HT&2
&$20.88\pm0.32$
&$23.52\pm0.53$
&$20.97\pm0.33$
&$23.52\pm0.53$
&$20.95\pm0.32$
&$23.52\pm0.53$
\\
HT&4
&$21.24\pm0.35$
&$23.62\pm0.52$
&$21.47\pm0.39$
&$23.62\pm0.52$
&$21.43\pm0.41$
&$23.62\pm0.52$
\\
HT&8
&$21.95\pm0.31$
&$23.45\pm0.43$
&$22.94\pm0.55$
&$23.45\pm0.43$
&$22.94\pm0.54$
&$23.45\pm0.43$
\\
    \bottomrule
    \end{tabular}}%
\end{table*}

\begin{table*}[t]
    \centering
    \caption{Comparison of defender costs with \texttt{PATHDEFENSE} and the \texttt{BigWeight} baseline when the target paths have different terminal nodes. Results are shown for all datasets with 1, 2, 4, and 8 target paths, and consider cases where the attacker uses the \texttt{GreedyCost} heuristic, the \texttt{PATHATTACK} approximation algorithm, or uses a mixed-integer program solver to compute the attack. Each cost is reported in terms of the average and standard error over 10 trials. As with the case where all target paths share terminals, \texttt{PATHDEFENSE} always outperforms the \texttt{BigWeight} baseline. Note that the case with the attacker using \texttt{PATHATTACK} on the UKMET graph with 8 target path did not complete in 24 hours.}
    \label{tab:comparison_different}
    {\small\begin{tabular}{cccccccc}
    \toprule
    & &\multicolumn{2}{c}{\texttt{GreedyCost}}&\multicolumn{2}{c}{\texttt{PATHATTACK}}&\multicolumn{2}{c}{optimal attack}\\
    graph & $|P_t|$ & \texttt{PATHDEFENSE} &\texttt{BigWeight} & \texttt{PATHDEFENSE} &\texttt{BigWeight} & \texttt{PATHDEFENSE} &\texttt{BigWeight} \\
    \midrule
ER&1
&$19.51\pm0.16$
&$22.80\pm0.33$
&$19.60\pm0.16$
&$22.80\pm0.33$
&$19.62\pm0.14$
&$22.80\pm0.33$
\\
ER&2
&$21.62\pm0.09$
&$23.20\pm0.11$
&$21.82\pm0.14$
&$23.20\pm0.11$
&$21.74\pm0.08$
&$23.20\pm0.11$
\\
ER&4
&$22.51\pm0.09$
&$23.54\pm0.09$
&$22.75\pm0.08$
&$23.54\pm0.09$
&$22.70\pm0.08$
&$23.54\pm0.09$
\\
ER&8
&$22.78\pm0.08$
&$23.71\pm0.09$
&$23.05\pm0.08$
&$23.71\pm0.09$
&$23.00\pm0.08$
&$23.71\pm0.09$
\\
BA&1
&$18.37\pm0.39$
&$20.53\pm0.38$
&$18.47\pm0.43$
&$20.53\pm0.38$
&$18.50\pm0.42$
&$20.53\pm0.38$
\\
BA&2
&$19.66\pm0.30$
&$21.13\pm0.32$
&$19.78\pm0.29$
&$21.13\pm0.32$
&$19.82\pm0.30$
&$21.13\pm0.32$
\\
BA&4
&$20.38\pm0.19$
&$21.55\pm0.14$
&$20.56\pm0.18$
&$21.55\pm0.14$
&$20.60\pm0.19$
&$21.55\pm0.14$
\\
BA&8
&$20.89\pm0.12$
&$21.90\pm0.09$
&$21.17\pm0.11$
&$21.90\pm0.09$
&$21.18\pm0.11$
&$21.90\pm0.09$
\\
WS&1
&$26.53\pm0.43$
&$30.38\pm0.90$
&$26.64\pm0.43$
&$30.38\pm0.90$
&$26.67\pm0.43$
&$30.38\pm0.90$
\\
WS&2
&$29.55\pm0.37$
&$32.17\pm0.36$
&$29.71\pm0.38$
&$32.17\pm0.36$
&$29.70\pm0.38$
&$32.17\pm0.36$
\\
WS&4
&$31.16\pm0.35$
&$33.41\pm0.34$
&$31.35\pm0.36$
&$33.41\pm0.34$
&$31.36\pm0.36$
&$33.41\pm0.34$
\\
WS&8
&$31.93\pm0.29$
&$34.45\pm0.31$
&$32.34\pm0.28$
&$34.45\pm0.31$
&$32.17\pm0.29$
&$34.45\pm0.31$
\\
SBM&1
&$20.44\pm0.44$
&$23.06\pm0.33$
&$20.48\pm0.44$
&$23.06\pm0.33$
&$20.46\pm0.43$
&$23.06\pm0.33$
\\
SBM&2
&$22.70\pm0.27$
&$24.16\pm0.33$
&$22.76\pm0.26$
&$24.16\pm0.33$
&$22.74\pm0.27$
&$24.16\pm0.33$
\\
SBM&4
&$23.95\pm0.23$
&$25.00\pm0.25$
&$24.07\pm0.25$
&$25.00\pm0.25$
&$24.12\pm0.24$
&$25.00\pm0.25$
\\
SBM&8
&$24.24\pm0.16$
&$25.19\pm0.16$
&$24.39\pm0.16$
&$25.19\pm0.16$
&$24.42\pm0.17$
&$25.19\pm0.16$
\\
USAIR&1
&$11.98\pm0.33$
&$20.57\pm5.46$
&$11.97\pm0.32$
&$20.57\pm5.46$
&$11.97\pm0.32$
&$20.57\pm5.46$
\\
USAIR&2
&$12.62\pm0.51$
&$46.91\pm8.35$
&$12.86\pm0.59$
&$54.49\pm8.51$
&$12.85\pm0.58$
&$54.49\pm8.51$
\\
USAIR&4
&$12.90\pm0.33$
&$46.04\pm6.66$
&$13.07\pm0.40$
&$48.83\pm5.89$
&$13.07\pm0.40$
&$49.01\pm6.55$
\\
USAIR&8
&$13.61\pm0.27$
&$48.36\pm6.51$
&$13.78\pm0.29$
&$49.23\pm6.09$
&$13.88\pm0.29$
&$44.70\pm4.37$
\\
UKMET&1
&$11.28\pm0.60$
&$15.29\pm1.33$
&$11.49\pm0.62$
&$15.30\pm1.32$
&$11.54\pm0.61$
&$15.43\pm1.34$
\\
UKMET&2
&$13.09\pm0.47$
&$19.41\pm1.12$
&$13.19\pm0.49$
&$19.37\pm1.10$
&$13.28\pm0.46$
&$19.43\pm1.12$
\\
UKMET&4
&$13.80\pm0.40$
&$22.72\pm0.98$
&$13.38\pm0.46$
&$22.71\pm0.97$
&$14.21\pm0.44$
&$22.77\pm0.98$
\\
UKMET&8
&$14.77\pm0.27$
&$28.15\pm0.70$
& N/A
&$28.24\pm0.74$

&$15.61\pm0.25$
&$28.35\pm0.73$
\\
AS&1
&$25.60\pm0.65$
&$66.32\pm18.85$
&$25.62\pm0.66$
&$66.32\pm18.85$
&$25.63\pm0.67$
&$66.32\pm18.85$
\\
AS&2
&$27.65\pm0.40$
&$111.69\pm19.91$
&$27.79\pm0.42$
&$111.69\pm19.91$
&$27.74\pm0.42$
&$111.69\pm19.91$
\\
AS&4
&$28.05\pm0.35$
&$136.81\pm14.18$
&$28.34\pm0.36$
&$136.81\pm14.18$
&$28.18\pm0.37$
&$136.81\pm14.18$
\\
AS&8
&$28.57\pm0.26$
&$139.46\pm13.04$
&$31.03\pm0.56$
&$139.46\pm13.04$
&$30.86\pm0.54$
&$139.46\pm13.04$
\\
HT&1
&$21.07\pm0.36$
&$23.13\pm0.49$
&$21.03\pm0.35$
&$23.13\pm0.49$
&$21.07\pm0.36$
&$23.13\pm0.49$
\\
HT&2
&$22.39\pm0.20$
&$40.79\pm16.67$
&$22.42\pm0.20$
&$40.79\pm16.67$
&$22.42\pm0.20$
&$40.79\pm16.67$
\\
HT&4
&$22.79\pm0.22$
&$33.81\pm9.95$
&$22.89\pm0.23$
&$33.81\pm9.95$
&$22.84\pm0.21$
&$33.81\pm9.95$
\\
HT&8
&$23.30\pm0.15$
&$34.51\pm6.93$
&$23.42\pm0.16$
&$34.51\pm6.93$
&$23.43\pm0.16$
&$34.51\pm6.93$
\\
    \bottomrule
    \end{tabular}}%
\end{table*}
\end{document}